%
%
%

\documentclass[notitlepage,twocolumn,superscriptaddress,10pt,nofootinbib,
prx,longbibliography,floatfix
]{revtex4-2}

\bibliographystyle{alpha}

\usepackage{graphicx}
\usepackage{bm}
\usepackage{xcolor}
\usepackage{hyperref}
\usepackage{amsmath}
\usepackage{amssymb}
\usepackage{amsthm}
\usepackage{cleveref}
\usepackage[justification=RaggedRight]{caption}
\usepackage{subcaption}
\usepackage{xspace}
\usepackage{complexity}
\usepackage{bbold}
\usepackage{enumitem}
\usepackage{physics}
\usepackage{algorithm}
\usepackage{algorithmicx}
\usepackage{placeins}
\usepackage{balance}

\usepackage{float}

\definecolor{orange(colorwheel)}{rgb}{1.0, 0.5, 0.0}

\newcommand{\adj}{^{\dag}}
\newcommand{\ssec}[1]{\subsection{#1}}
\newcommand{\GHZ}{\mathrm{GHZ}}




\newtheorem{theorem}{Theorem}
\newtheorem{lemma}[theorem]{Lemma}
\newtheorem{corollary}[theorem]{Corollary}

\newtheorem*{rep@theorem}{\rep@title}
\newcommand{\newreptheorem}[2]{%
\newenvironment{rep#1}[1]{%
 \def\rep@title{#2 \ref{##1}}%
 \begin{rep@theorem}}%
 {\end{rep@theorem}}}
\makeatother

\newreptheorem{thm}{Theorem}
\newreptheorem{lem}{Lemma}

\newcommand{\bo}{b^{(1)}}
\newcommand{\bt}{b^{(2)}}

\begin{document}


\title{Satisfiability Phase Transtion for Random Quantum 3XOR Games}
\author{Adam~Bene~Watts}
\email{adam.benewatts1@uwaterloo.ca}
\affiliation{Institute for Quantum Computing, University of Waterloo, Canada}
\author{J. William Helton}
\email{helton@math.ucsd.edu}
\affiliation{Mathematics Dept., UC San Diego}
\author{Zehong Zhao}
\email{zez045@ucsd.edu}
\affiliation{Mathematics Dept., UC San Diego}
\date{\today}

\newcommand{\questionSet}{\mathcal{Q}}
\newcommand{\questionVec}[1]{\vec{q}_{#1}}
\newcommand{\question}[2]{q_{#1}(#2)}
\newcommand{\players}{k}
\newcommand{\winParity}[1]{s_{#1}}
\newcommand{\response}[2]{x_{#1}^{(#2)}}
\newcommand{\clauseNumber}{m}
\newcommand{\clauseSet}{\mathcal{C}}
\newcommand{\clause}[1]{c_{#1}}
\newcommand{\game}{\mathcal{G}}
\newcommand{\qperf}{Q-perfect\xspace}
\newcommand{\Qperf}{Q-Perfect\xspace}
\newcommand{\cperf}{C-perfect\xspace}
\newcommand{\Cperf}{C-Perfect\xspace}
\newcommand{\qcperf}{Q-perfect and C-perfect\xspace}
\newcommand{\QCperf}{Q and C-Perfect\xspace}

\def\perfeq{Defining Equations}

\def\bbZ{{\mathbb Z}}

\def\ben{\begin{enumerate}}
\def\een{\end{enumerate}}


\date{\today}

\begin{abstract}

Recent results showed it was possible to determine if a modest size  3XOR game has a perfect quantum strategy.
We build on these
and give an explicit polynomial time algorithm
which constructs such a perfect strategy or refutes its existence.
This new tool lets us
 numerically study 
 the behavior of randomly generated 3XOR games with large numbers of questions. 
 
 A key issue is: how common are pseudotelephathy games (games with perfect quantum strategies but no perfect classical strategies)?
 Our experiments strongly indicate
that the probability of a randomly generated game being pseudotelpathic stays far from 1, indeed it is bounded below~0.15. 
 

We also
 find strong evidence that randomly generated 3XOR games undergo 
 both a quantum and classical 
 ``phase transition", transitioning from almost certainly perfect to almost certainly imperfect as the ratio of number of clauses ($m$) to number of questions ($n$) increases. The locations of these two phase transitions appear to coincide at 
  $m/n \approx 2.74$.


\end{abstract}
\maketitle

\section{Introduction}
\label{sec:intro}

Nonlocal games describe experiments in which a \emph{verifier} tests two or more \emph{players}' ability to produce correlated responses without communicating.
In a round of a nonlocal game the verifier sends questions to each player and the players return responses without knowing the questions sent to other players.
The questions asked and responses received are then collectively scored by the verifier according to a scoring function known by both the players and the verifier before the game begins. By convention, the score the players receives takes value in the interval $[0,1]$.

The key objects of study for a nonlocal game are its values, each defined to be the optimal (supremum) score players sharing certain resources can achieve in expectation while playing the game. The supremum score the players can achieve using classical resources (i.e. shared randomness) is called the \emph{classical value} of the game. The supremum score non-communicating players can achieve if they are allowed to make measurements on shared finite dimensional entangled states is called the \emph{quantum value} of the game.\footnote{A third value, called the commuting operator value, denotes the supremum score players can achieve if they can make commuting measurements on a shared, possibly infinite dimensional, entangled state. This value can be larger than the quantum value, but two values will coincide for the games characterized in this paper, so we will mostly avoid discussing it.} 

Games whose quantum value exceeds their classical value are important objects in quantum information. 
The first known game with this property, called the CHSH game, is closely related to the test proposed in John Bell's famous 1964 paper~\cite{bell1964einstein}.\footnote{Though the nonlocal games formalism wasn't developed until later by Clauser, Horne, Shimony and Holt after whom the CHSH game is named.} 
Since then, many other games with superclassical quantum value have been discovered, and these games have found important applications in quantum crypography, quantum foundations, delegated quantum computing, and fundamental mathematics~\cite{brassard2005quantum,DQC:reichardt2013classical,ji2020mip,rauch2018cosmic,
big2018challenging,
QKD2:ekert1991quantum}. 
Yet little is known about the general conditions under which games can have superclassical values. How 
common are games with superclassical quantum value? Is this phenomenon rare? Or can it be viewed as a generic feature of nonlocal games?

A major obstacle to answering these questions comes from the fact that the quantum value of a nonlocal game is, in general, uncomputable -- even approximately.\footnote{Formally: in general it is undecidable whether the value of a game is less than 1/2 or greater than~1~\cite{ji2020mip}.} 
There are some known techniques for bounding this value: a decreasing series of upper bounds on this value is provided by the NPA/ncSoS algorithm, and a lower bounds can be obtained by exhaustive search over \emph{strategies} players can use for responding to questions. However the run-time of both these algorithms quickly become infeasible as the game size (measured in terms of the number of questions and responses) grows large. 
Moreover, the NPA/ncSoS upper bound on a game's value does not necessarily converge to the quantum value (as implied by the aforementioned uncomputability result). 

\subsection{Results in this paper}

This paper studies a well known family of nonlocal games called XOR games. 
Every XOR is based on a system of equations over $\mathbb{Z}_2$.
In a round of the game, players are sent questions representing single variables in one equation in the system, and win if their responses (interpreted as variable assignments) satisfy that equation. 
We classify XOR games in terms of their numbers of \emph{questions} and \emph{clauses}: an $n$ question and $m$ clause XOR game has $n$ different variable assignments that can be requested from each player and $m$ equations involved the system of equations defining the game.
We are particularly interested in XOR games whose classical or quantum value equals the maximum possible score of 1. We call those games \emph{\qperf} or \emph{\cperf}, respectively. 
A recent series of works \cite{watts2018algorithms,watts20203xor} showed that it is possible to decide if 3 player XOR (3XOR) game is \qperf in time $\poly(n,m)$. 

In the first half of this paper we build on techniques described in \cite{watts20203xor}.
We give a polynomial time algorithm which both identifies \Qperf games and describes a measurement strategy players can use to win them.
 It is highly efficient in practice, letting us easily study games with up to 100 questions per player and 300 clauses.  By contrast, the third level of the NPA hierarchy becomes infeasible when there are more then about 5 questions per player, and a naive implementation of the techniques proposed in \cite{watts20203xor} (implemented and discussed in \Cref{app:dual_equations}) becomes infeasible past about 30 questions per player.

This strategies described by our algorithm belong to a class of strategies called MERP strategies in \cite{watts2018algorithms}: in these strategies each player holds a single qubit of a shared GHZ state, and determines their responses to a question $j$ by measuring an observable of the form
\begin{align}
    \exp(i \sigma_z z^{(\alpha)}_j)  \sigma_x \exp(- i \sigma_z z^{(\alpha)}_j)
\end{align}
where $\sigma_x, \sigma_z$ are the Pauli X and Z operators and the rotation angle $z_j$ depends on the question $j$ asked. 
\cite{watts20203xor} showed such a strategy can always achieve value $1$ on 3XOR games with perfect quantum value, but an explicit algorithm for finding these strategies was not given until now.

In the second half of this paper we use the algorithm developed in the first half to study numerically the behavior of randomly generated 3XOR games with large numbers of questions and clauses. In particular, we study the probability that a randomly generated game with $n$ questions and $m$ clauses is a \emph{pseudotelepathy game}, i.e. a game which is \qperf but not \cperf. We call this probability, the \emph{pseudotelepathy probability}. At each fixed question number $n > 3$, we find this probability is maximized when the ratio of clauses to questions in the 3XOR game is near a constant $m/n \approx 2.74$. Somewhat mysteriously, we also find this maximized pseudotelepathy probability initially rises steadily with $n$ to peak at $n \approx 38$ and then falls off. 
The pseudotelepathy probability at the peak is about $0.14$.
 
We also observe strong evidence that the probability a 3XOR game is \qperf undergoes a sharp phase transition, meaning that there exists a constant $C_q$ such that (in the limit of large question number $n$) a randomly generated 3XOR game with $m$ clauses and $n$ questions is almost-certainly satisfiable provided $m/n < C_q$ and almost certainly satisfiable provided $m/n > C_q$. Based on the observed pseudotelepathy behavior of randomly generated 3XOR games and additional  numerical evidence, we conjecture the location $C_q$ of this phase transition coincides with the location $C_c$ of a sharp \cperf phase transition. 

\subsection{Readers Guide}

\Cref{sec:XOR_games_and_algorithm} reviews the definition of XOR games and some relevant results from~\cite{watts2018algorithms,watts20203xor}, then presents the new algorithm developed in this paper for deciding if 3XOR games are \qperf, as well as finding descriptions of the perfect quantum strategies when they exist. \Cref{sec:Numerics_and_phase_transitions} begins with a discussion of phase transitions, then presents our main numerical results. \Cref{sec:Discussion} summarizes the main results of the paper and presents some open questions. The appendices contain details deemed too technical for the main paper.

\def\cG{{\mathcal G}}
\def\cL{{\mathcal L}}

\section{The linear equations corresponding to perfect games}

\label{sec:XOR_games_and_algorithm}

In this section we overview of the mathematical and computational techniques used in this letter to identify \qperf XOR games. We begin by reviewing the definition of XOR games and some relevant results from the literature. 

    \subsection{XOR Games and Defining Equations}
    
    In this section we develop notation to describe 3 player XOR (3XOR) games, 
    which will be the main focus of this paper. 
    Following convention, we will call the 3 players Alice, Bob, and Charlie. 
    In a round of a 3XOR game $\game$ the verifier sends each player an integer question, which we label $a_i, b_i$ and $c_i$, respectively. Each question is drawn from a set $\{1,2,..., n\}$ of possible integer questions.
    Each players responds with a single bit answer, selected according to some strategy, which may involve randomness (in the classical case) or measurements of some quantum state (in the quantum case). 
    The players win the game if the sum of their answers mod 2 (i.e. the XOR of their answers) equals some winning \emph{parity bit} $s_i$.  
    We call each tuple $(a_i, b_i, c_i, s_i)$ consisting of questions and associated parity bit which together describe a possible round of the game $\game$ a \emph{clause} of game. 
    
    In all the XOR games we consider in this paper, we assume the verifier chooses the questions sent to the players and associated winning parity bit uniformly at random from a set of $m$ possible clauses.\footnote{Changing the probability of certain clauses being selected does not change whether or not a game is \qperf or \cperf (as long as all probabilities remain nonzero).
    Since that is what is studied in this paper, we can make this assumption freely.} 
    Thus, a game $\game$ can be completely described by specifying the set of clauses associated with the game. 
    
    We can also describe each clause $(a_i, b_i, c_i, s_i)$ of an 3XOR game $\game$ via an equation
  \begin{align}
     \label{eq:XOR_games_equation}
         x^{(1)}_{a_i} + x^{(2)}_{b_i} + x^{(3)}_{c_i} = s_i  \pmod{2}. 
     \end{align}
     Then the set of all clauses associated with $\game$ can be described via a system of equations, which we write in matrix form as 
     \begin{align}
     \label{eq:matrix_SOE}
            \Gamma x := 
            \begin{pmatrix}
             A & B & C
            \end{pmatrix}
            x
            = 
            S \pmod{2}.
     \end{align}
      Here $A, B, C$ are $n \times m$ matrices whose $i$-th row has a single nonzero entry in the position corresponding to $a_i, b_i, c_i$ respectively, $x$ is a $1 \times 3n$ vector of random variables, and $S$ is a $1 \times m$ vector listing the variables $s_i$. This system of equations given in \Cref{eq:matrix_SOE} specifies all the clauses of the associated 3XOR game and hence uniquely determines the game. For that reason, these equations are referred to as the \emph{defining equations} of the associated 3XOR game. 
      Note that we have not yet specified the field the variables $x$ can range over. This is deliberate; in the next section we will see how studying the defining equations of a XOR game over $\mathbb{Z}_2$ and $\mathbb{R}$ gives information about the classical and quantum values of a game, respectively. 
    
    \subsection{Computing the Value of XOR Games}
    
    We can study the quantum and classical values of an XOR game by considering possible \emph{strategies} players may use for mapping questions to responses. 
    A straightforward convexity argument shows that the classical value of a game can always be achieved by a \emph{deterministic strategy}, where players give fixed responses to each possible question. 
    Letting variables $x_j^{(\alpha)} \in \{0,1\}$ represent the response of player $\alpha$ to question $j$, we see a deterministic strategy described by these variables wins the round associated with clause $(a_i, b_i, c_i, s_i)$ iff these variables satisfy the associated (defining) equation
    \begin{align}
        x^{(1)}_{a_i} + x^{(2)}_{b_i} + x^{(3)}_{c_i} = s_i \pmod{2}.
    \end{align}
    It follows that the classical value of a game $\game$ is equal to the maximum fraction of satisfiable equations in the defining equations of $\game$ when $x$ ranges over $\mathbb{Z}_2$. 
    In particular, $\game$ is \cperf iff the defining equations of $\game$ have a solution over $\mathbb{Z}_2$. 
    
    In the quantum case, there are many possible strategies players can use for mapping questions to responses. One family of strategies that will be very important to this paper are called \emph{MERP strategies} in~\cite{watts2018algorithms}.
    As mentioned in the introduction, these are strategies in which each player holds one qubit of a 3 qubit GHZ state
    \begin{align}
        \ket{\GHZ} = \frac{1}{\sqrt{2}} \left(\ket{000} + \ket{111} \right)
    \end{align}
    and the $\alpha$-th player produces a response to question $i$ by measuring the observable 
    \begin{align}
    \exp(2 \pi i \sigma_z z^{(\alpha)}_i)  \sigma_x \exp(- 2 \pi i \sigma_z z^{(\alpha)}_i) \label{eq:MERP_strat}
    \end{align}
    on their qubit of the GHZ state. 
    In this discussion $\sigma_x, \sigma_z$ denote the Pauli X and Z operators, and we use multiplicative notation to describe responses, so a ``-1'' outcome corresponds to a 1 response in the standard (additive) notation, and a ``1'' outcome corresponds to a 0 response. The variables $z^{(\alpha)}_i$ depend on both the player $\alpha$ and the question $i$ they are sent and can take any value in $\mathbb{R}$.
    
    Straightforward calculation (see Claim 5.27 of \cite{watts2018algorithms}) shows that the score a MERP strategy achieves on a 3XOR game $\game$ with clauses $(a_i,b_i,c_i,s_i)$ is given by the function
    \begin{align}
    \label{eq:merpval}
        \frac{1}{2} + \frac{1}{2m} \sum_{i = 1}^m \cos\left(\pi \left(z^{(1)}_{a_i} + z^{(2)}_{b_i} +  z^{(3)}_{c_i} - s_i\right)\right) .
    \end{align}
    It follows that a 3XOR game $\game$ has a perfect MERP strategy iff there exists a real valued variable assignment for the $z^{(\alpha)}_i$ satisfying 
    \begin{align}
        z^{(1)}_{a_i} + z^{(2)}_{b_i} +  z^{(3)}_{c_i} = s_i \pmod{2}
    \end{align}
    for all clauses $(a_i,b_i,c_i,s_i)$ of $\game$. Equivalently, $\game$ has a perfect MERP stategy iff the defining equations of $\game$ have a solution over $\mathbb{R}$. Moreover, any real solution to the defining of $\game$ can be used to construct a MERP strategy simply by substituting the solution into \Cref{eq:MERP_strat}.
    

    In~\cite{watts20203xor} it was shown that a 3 player XOR game $\game$ has a perfect quantum (or commuting operator) strategy iff the game has a perfect MERP strategy. The proof of this fact involved a system of equations ``dual'' to the defining equations of the game over $\mathbb{R}$, which had a solution iff $\game$ did not have a perfect quantum (or MERP) strategy. Because this dual system of equations could be solved in polynomial time,~\cite{watts20203xor} concluded that \qperf 3XOR games could be identified in polynomial time. However,~\cite{watts20203xor} did not give an algorithm for solving the defining equations of $\game$ over $\mathbb{R}$ directly.  
    We present such an algorithm in the next section.

\subsection{Efficient Algorithm for Characterizing \Qperf 3XOR Games}
\label{ssec:efficientAlgorithm}

We now give an algorithm which identifies \qcperf 3XOR games and finds
winning strategies for them. 
Of course, by the observations made in the previous section, \cperf games can easily be identified and classical strategies can be found via Gaussian elimination -- the primary reason we point out this feature of our algorithm is its possible use in theory. 

The algorithm makes use of a normal form for matrices called the \emph{Hermite Normal Form (HNF)}. For our purposes it suffices to know that the HNF is an efficiently computable normal form in which an integer valued matrix $M$ is rewritten as 
\begin{align}
    M = \Omega H
\end{align}
where $\Omega, H$ are also integer valued, $\Omega$ has integer inverse and $H$ is upper triangular with linearly independent rows(except for zero rows)~\cite{kannan1979polynomial}. A detailed discussion of the HNF is provided in \Cref{app:Hermite_Background}.

\begin{figure}[h]
\begin{algorithm}[H]
\caption{\QCperf Algorithm}
\label{alg:QandC_perf}

\begin{algorithmic}[1]

\Statex \textbf{Input:} An $n$ question, $m$ clause 3XOR game $\game$ with defining equation
\begin{align}
    \Gamma x = S  \ \ \ \pmod{2}.
\end{align}

\Statex \textbf{Output:} An classification of $\game$ as either \qperf, \qcperf, or neither. 
Additionally, if $\game$ is \qperf the algorithm outputs a vector $z \in \mathbb{R}^{3n}$ describing a perfect MERP strategy. If $\game$ is \cperf the algorithm outputs a vector $z \in \mathbb{Z}_2^{3n}$ describing a perfect deterministic strategy. 
\vspace{1pt}
\State
  \textbf{Procedure:} Compute the row Hermite Normal Form
   $(\Omega, H)$ of the matrix 
  $(\Gamma \ S)$  to obtain the integer-valued matrix
  \begin{align}
 H= \begin{pmatrix}
      R  & \bo \\
      0  & \bt  
 \end{pmatrix} 
 \end{align}
  where $\bo$ and $\bt$ are column vectors and the matrix $R$ is upper triangular and full rank.

\State
Check whether all entries of $\bt$ are even integers.
\State If no, then $\game$ is neither \qperf nor \cperf. 
\State
If yes, then $\game$ is \qperf. Additionally:
 \begin{enumerate}[label = {(\alph*)}]
 \item 
 Vectors $z$ describing perfect MERP strategies are given by rational 
 (or real) solutions $z$ to the linear equation 
 \begin{align} \label{eq:alg_soln_eqn}
   Rz = \bo \pmod{2}.
 \end{align}
 Because $R$ is full rank and upper triangular at least one solution must exist (and is easy to compute). 
 \item The game $\game$ is \cperf iff \Cref{eq:alg_soln_eqn} has a binary solution 
 (and all entries of $\bt$ are even). If such a solution exists, 
 it describes a perfect deterministic strategy for the game.
\end{enumerate}

\end{algorithmic}
\end{algorithm}
\end{figure}

\begin{theorem}
\Cref{alg:QandC_perf} correctly characterizes all 3XOR games as either \Qperf, \Cperf, or not and produces perfect quantum and classical strategies for them whenever these strategies exist. Moreover it runs in time $\poly(n,m)$.
\end{theorem}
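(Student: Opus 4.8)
The plan is to translate the two conditions ``$\game$ is \qperf'' and ``$\game$ is \cperf'' --- which, by the results reviewed above, are equivalent to the defining equations $\Gamma x = S \pmod 2$ having a solution over $\mathbb{R}$ and over $\mathbb{Z}_2$ respectively --- into conditions on the Hermite Normal Form data $(\Omega, H)$, and then check that each line of the algorithm tests or constructs exactly the right object. The one algebraic fact that does all the work is that $\Omega$ is $\mathbb{Z}$-unimodular, so $\Omega$ and $\Omega^{-1}$ are integer matrices that each map the lattice $\mathbb{Z}^m$ (hence $2\mathbb{Z}^m$) bijectively onto itself; in particular a congruence modulo $2\mathbb{Z}^m$ is preserved under multiplication by $\Omega^{\pm 1}$.

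First I would rewrite a real solution of $\Gamma x = S \pmod 2$ as a pair $(z,k)$ with $z \in \mathbb{R}^{3n}$, $k \in \mathbb{Z}^m$ and $\Gamma z = S + 2k$. Substituting $(\Gamma\ S) = \Omega H$, multiplying through by $\Omega^{-1}$, and replacing $k$ by $k' = \Omega^{-1}k$ (which ranges over all of $\mathbb{Z}^m$ as $k$ does), this becomes: there exist $z \in \mathbb{R}^{3n}$, $k' \in \mathbb{Z}^m$ with
\begin{align}
  \begin{pmatrix} R \\ 0 \end{pmatrix} z \;=\; \begin{pmatrix} \bo \\ \bt \end{pmatrix} + 2 k'.
\end{align}
The bottom block forces $\bt \in 2\mathbb{Z}^m$, while the top block is $R z = \bo \pmod 2$, which always has a rational solution by back-substitution since $R$ is upper triangular of full row rank (the top part of $k'$ is then whatever it is forced to be). So $\game$ is \qperf\ iff every entry of $\bt$ is even, which is precisely Steps 2--4. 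Running the argument backwards shows that any solution $z$ of \Cref{eq:alg_soln_eqn} pushes forward --- multiply by $\Omega$, using $\Omega\cdot 2\mathbb{Z}^m = 2\mathbb{Z}^m$ together with $\bt$ even --- to a genuine real solution of the defining equations, hence (by the cited equivalence with perfect MERP/quantum strategies) to the strategy claimed in Step 4(a).

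The \cperf\ case is the identical computation with $z$ and $k'$ both constrained to be integral: the bottom block again needs $\bt$ even, and the top block needs an integer --- equivalently, after reducing mod $2$, a binary --- solution of $R z = \bo \pmod 2$, which is Step 4(b). Conversely any $\mathbb{Z}_2$-solution of the defining equations is in particular a real one, so \cperf\ $\Rightarrow$ \qperf\ and the three-way classification output by the algorithm is exhaustive and mutually exclusive. For the runtime I would invoke that the row HNF $(\Omega, H)$ of an integer matrix, with a polynomial bound on the bit-length of all entries, is computable in time $\poly(n,m,\text{bit-size})$~\cite{kannan1979polynomial}; the input here has $0/1$ entries, so this is $\poly(n,m)$. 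Checking the parity of $\bt$, solving the triangular system $R z = \bo$ over $\mathbb{Q}$ (whose solution again has polynomially bounded bit-size), and Gaussian elimination for $R z = \bo \pmod 2$ over $\mathbb{F}_2$ are all polynomial as well.

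I expect the only real obstacle to be the careful bookkeeping of the ``$\pmod 2$ over $\mathbb{R}$'' condition under the unimodular change of basis $\Omega$: one must be sure that reduction modulo the lattice $2\mathbb{Z}^m$ commutes with multiplication by $\Omega$ and $\Omega^{-1}$, and that ``$R$ upper triangular and full rank'' genuinely guarantees solvability of the top block over $\mathbb{Q}$ (but not, in general, over $\mathbb{Z}$ --- that is exactly where the \qperf\ vs.\ \cperf\ distinction enters). The only other point deserving a sentence is the bit-complexity bound on the HNF and on the rational solution $z$, which is what makes the ``polynomial time'' claim literally true rather than merely a count of arithmetic operations.
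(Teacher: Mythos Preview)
Your proposal is correct and is essentially the same argument as the paper's. The paper's proof in the body is a one-line deferral to \Cref{thm:QperfHerm_app} in the appendix, whose proof proceeds exactly as you do: pass from $\Gamma z = S \pmod 2$ to an equivalent statement about $H$ using that $\Omega$ is unimodular (the paper packages this as $(\Gamma\ S)\begin{pmatrix} z \\ -1\end{pmatrix}=0 \pmod 2$ rather than introducing the explicit integer vector $k$, but the content is identical), read off ``$\bt$ even'' from the bottom block and ``$Rz=\bo \pmod 2$'' from the top, and use that $R$ is full row rank/upper triangular to guarantee a rational solution. Your treatment of the runtime is in fact slightly more careful than the paper's, which simply cites polynomial-time computability of the HNF and leaves the rest implicit.
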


\begin{proof}
Correctness and runtime of the algorithm follows directly from \Cref{thm:QperfHerm_app}.
\end{proof}

A sketch of how we implemented this algorithm (and a dual algorithm based on \cite{watts20203xor}) in Mathematica is in \Cref{sec:MmaPrograms}.

\section{Numerical Experiments}
\label{sec:Numerics_and_phase_transitions}

This section presents the main numerical results of this paper. 
Before presenting these results, we quickly review some background about phase transitions and how they relate to the \qperf and \cperf behavior of 3XOR games.

\subsection{Review of Phase Transitions and Pseudotelepathy in Random XOR Games}
\label{sec:phaseTransTelepathy}

Mathematically, a \textit{phase transition} (or \textit{threshold phenomenon}) refers to a phenomenon where qualitative properties of a large and randomly generated system change suddenly as a parameter controlling the probabilistic structure of the system undergoes a small change, that is, the parameter crosses a threshold.
This phenomenon was first observed by Erd\H{o}s and Renyi in their study of random of graphs \cite{erdos1960evolution}, and has since been studied in a wide variety of contexts, including statistical physics~\cite{yeomans1992statistical}, and theoretical computer science~\cite{friedgut2005hunting}. 

Relevant to this paper is the observation that many randomly generated systems of equations undergo a (sharp) satisfiability phase transition. Formally, this means there exists a threshold $T$ (a constant) with the property that randomly generated system of equations with $r$
variables and $m$ equations are almost-certainly satisfiable as $r \rightarrow \infty$ provided $m/r < T$ and almost-certainly unsatisfiable if $m/r > T$. In \cite{pittel2016satisfiability} a phase transition was shown for random linear systems of equations over $\mathbb{Z}_2$ with $k$ variables per equation ($k$-XOR-SAT). These systems of equations are closely related (but not identical to\footnote{The difference comes from the fact that the system of equations considered in \cite{pittel2016satisfiability} has 3 arbitrary variables per equation, while the defining equations of a 3XOR game have exactly one ``Alice'', ``Bob'', and ``Charlie'' variable per equation.}) the defining equations of 3XOR games. 

The close relationship between XOR games and linear systems of equations suggest that XOR games should undergo a \emph{classical satisfiability phase transition}: meaning that there exists a constant $C_c$ such that, in the limit of large $n$, XOR games with $n$ questions per player and $m$ clauses are almost-certainly \cperf if $m/n < C_c$ and almost certainly not \cperf if $m/n > C_c$. 

Prior to this work, little was known about whether or not XOR games also had a \qperf (or \emph{quantum satisfiability}) phase transition. In \Cref{sssec:heatmaps} we see strong evidence that they do.

The existence and location of a \qperf phase transition has important consequences concerning the likelihood of pseudotelepathy in random XOR games. Recall that an XOR game is called a pseudotelepathy game if it is \qperf but not \cperf. Because the quantum value of a game is bounded below by its classical value, we know that random generated $m$ question $n$ clause 3XOR games are \qperf with high probability when $m/n < C_c$. Now, if these games did not undergo a quantum satisfiability phase transition, or underwent a phase transition at some constant $C_q > C_c$ random $n$ question $m$ clause 3XOR games would still almost certainly be \qperf when $m/n = C_c + \epsilon$ for some small constant~$\epsilon$. These games would be almost certainly \qperf and almost certainly not \cperf, and therefore be pseudotelepathy games with probability approaching~1 in the limit of large~$n$. 

In the next section, we provide numerical evidence that the pseudotelepathy probability for all values of $n$ and $m$ is bounded well below 1. Thus, the evidence suggests that the scenario described in the paragraph above \emph{does not} occur, and hence 3XOR games undergo a quantum satisfiability phase transition at the same location as the classical satisfiability phase transition.

\subsection{Results}

\subsubsection{Heatmaps of Phase Transition}
\label{sssec:heatmaps}

We begin with figures showing how the probability of randomly generated 3XOR games being quantum or classically perfect varies with clause and question number. \Cref{fig:heatC} illustrates the classical situation, while \Cref{fig:heatQ} shows the quantum case. In both cases, we see that as question number increases the rate (in terms of ratio of clauses to questions) at which games transition from being very-likely-not-perfect (dark coloured) to very-likely-perfect
(light coloured) increases. 
This suggests the existence of a sharp phase transition in both the classical and quantum cases. We also see the locations of these two phase transitions approximately coincide. \Cref{fig:heatPseudoTel} gives the probability a game is a pseudotelepathy game (equal to the probability a game is \qperf minus the probability it is \cperf) as a function of clause and question numbers. 

\begin{figure*}[hp]
\centering
\subcaptionbox{\Cperf Probability\label{fig:heatC}}{
    \centering
    \includegraphics[width = 0.39\textwidth]{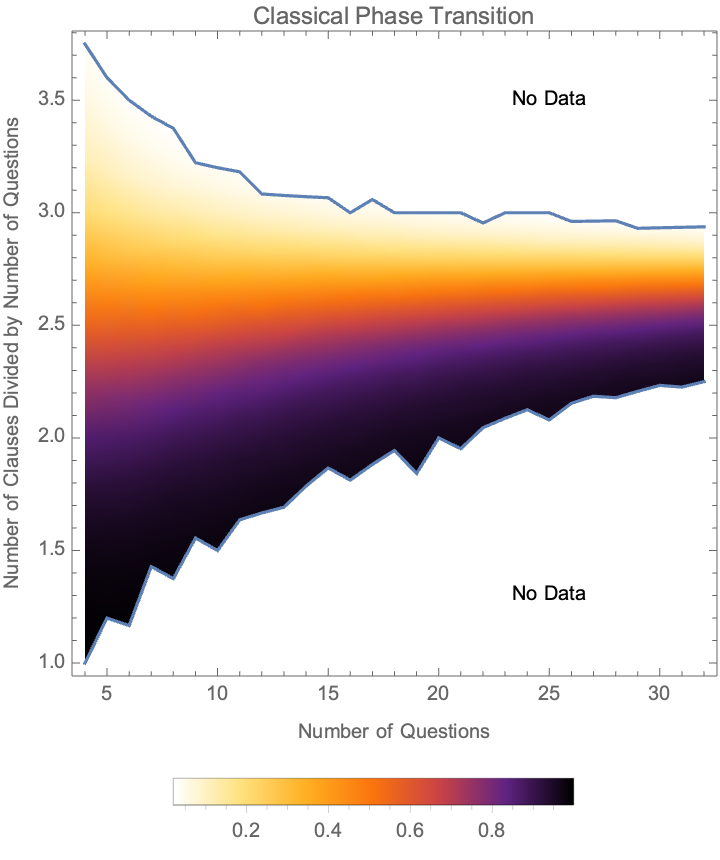}

}
\hspace{8pt}
\subcaptionbox{\Qperf Probability\label{fig:heatQ}}{
    \centering
    \includegraphics[width = 0.39\textwidth]{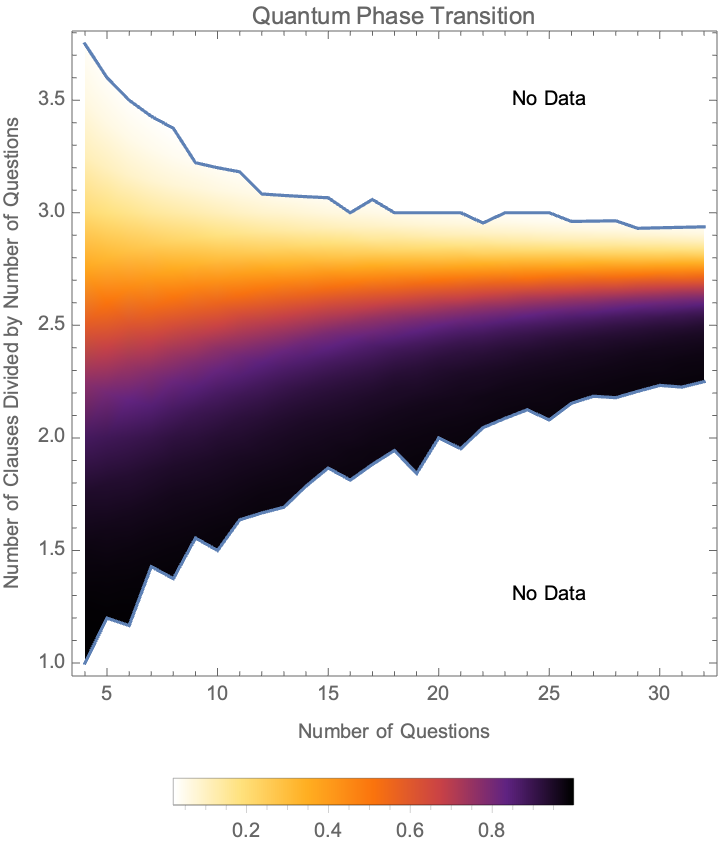}
}
\subcaptionbox{Pseudotelepathy Probability\label{fig:heatPseudoTel}}{
    \centering
    \includegraphics[width = 0.5\textwidth]{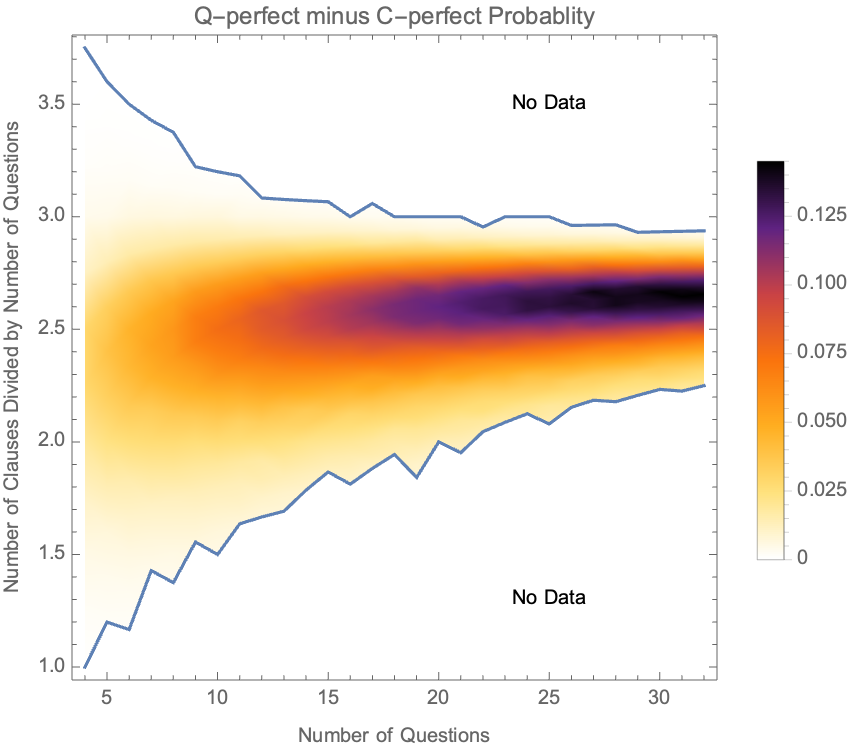}
}
\captionsetup{width = 0.95\textwidth}
\caption{The probabilities of a randomly generated 3XOR game being \Cperf, \Qperf, or pseudotelepathic as a function of number of clauses ($m$) and questions ($n$). The $x$-axis labels the number of questions while the y-axis shows a ratio $m/n$.
Probabilities are indicated by color, with dark colors being more likely and light colors less likely.
Data is generated using 50K samples per question and clause number.
}
\end{figure*}

\subsubsection{Phase Transition at Fixed Question Numbers}
\label{sssec:phase_transitions}

\Cref{fig:heatC,fig:heatQ} illustrate the \qcperf phase transitions, but are ``blurry''.
To get more detailed information one can plot slices of them. 
We illustrate with \Cref{fig:8Qcross_section,fig:38Qcross_section,fig:100Qcross_section} which plot vertical slices taken at different fixed question numbers. These experiments require less information than needed for heat maps, thus we are able to handle more questions than in the heatmaps.

\begin{figure*}[hp]
\centering
\subcaptionbox{8 Questions \label{fig:8Qcross_section}}{
    \centering
    \includegraphics[width = 0.42\textwidth]{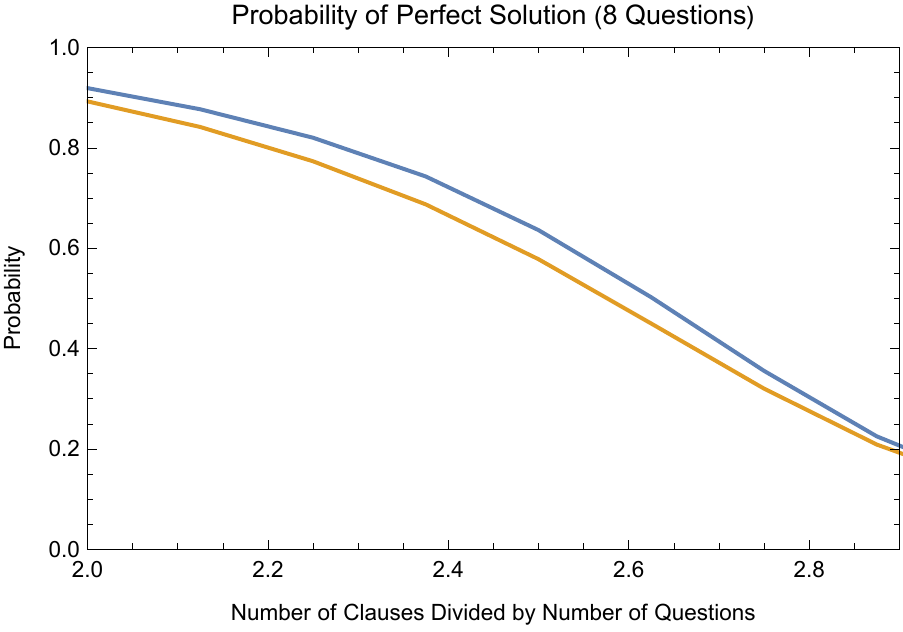}
}
\hspace{8pt}
\subcaptionbox{38 Questions \label{fig:38Qcross_section}}{
    \centering
    \includegraphics[width = 0.42\textwidth]{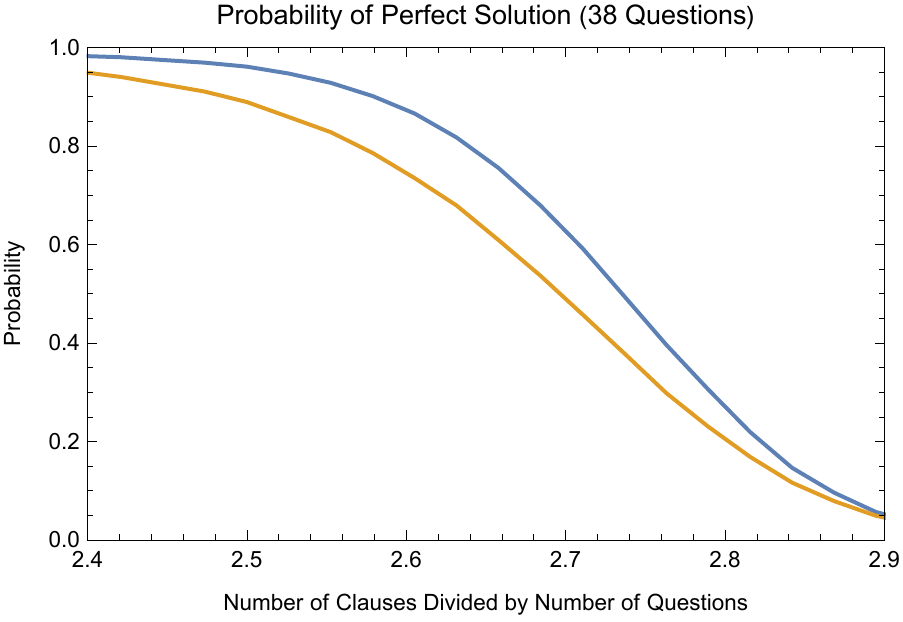}
}
\hspace{8pt}
\subcaptionbox{100 Questions \label{fig:100Qcross_section}}{
    \centering
    \includegraphics[width = 0.42\textwidth]{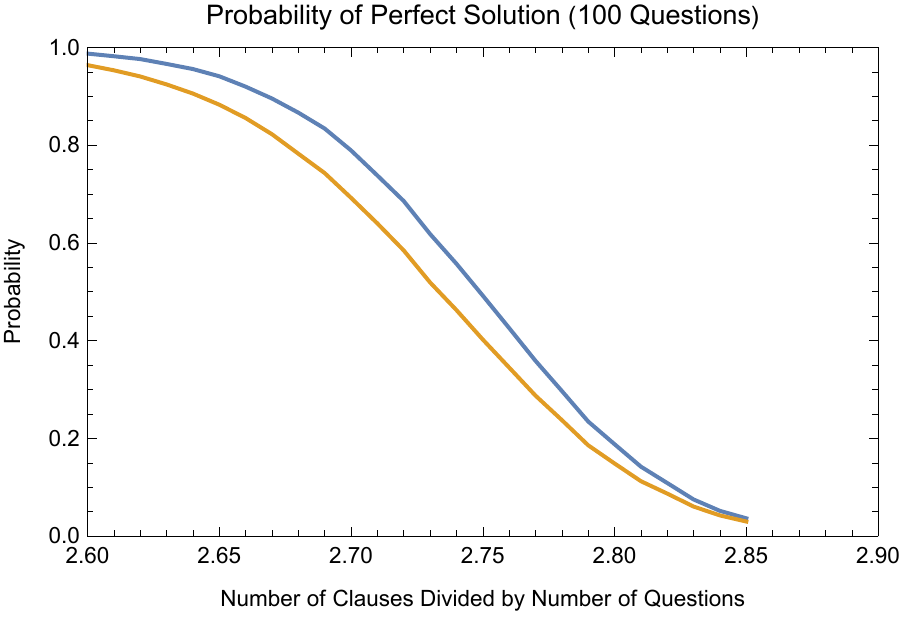}
}
\captionsetup{width=0.98\textwidth}
\caption{The Quantum/Classical Probability that a random 3XOR game at three fixed question numbers (8, 38, 100) is Q-perfect (top curve) and C-perfect (bottom curve) as a function of the number of clauses in the game.
Sample size is 50K.}
\end{figure*}

\subsubsection{Probability of Pseudotelepathy Games}
\label{sssec:pseudotelepathy_prob}

Finally, we investigate the maximum probability at each fixed question number $n$ that a randomly generated game with $n$ questions and $m$ clauses is a pseudotelepathy game. In \Cref{fig:Max_Pseudotel_Posn} we plot as a function of $n$ the number of clauses $m$ which maximizes the probability that a random game with $n$ questions and $m$ clauses is a pseudotelepathy game. In \Cref{fig:Max_Pseudotel_Prob} we plot this maximized probability as a function of $n$. 

\begin{figure*}[hp]
\centering
\subcaptionbox{Max Pseudotelepathy Position \label{fig:Max_Pseudotel_Posn}}{
    \centering
    \includegraphics[width =
    0.42\textwidth]{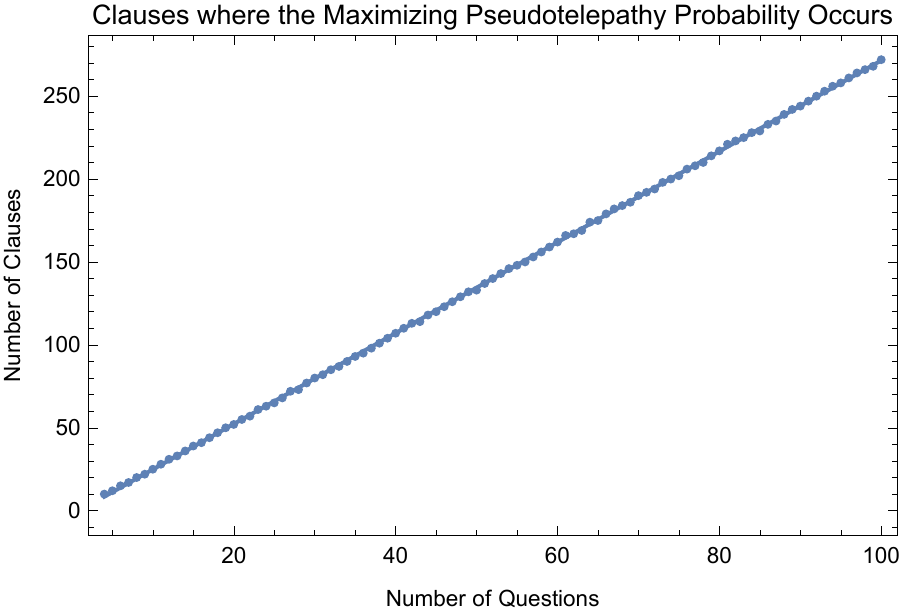}
}
\hspace{8pt}
\subcaptionbox{Max Pseudotelepathy Probability \label{fig:Max_Pseudotel_Prob}}{
    \centering
    \includegraphics[width = 0.42\textwidth]{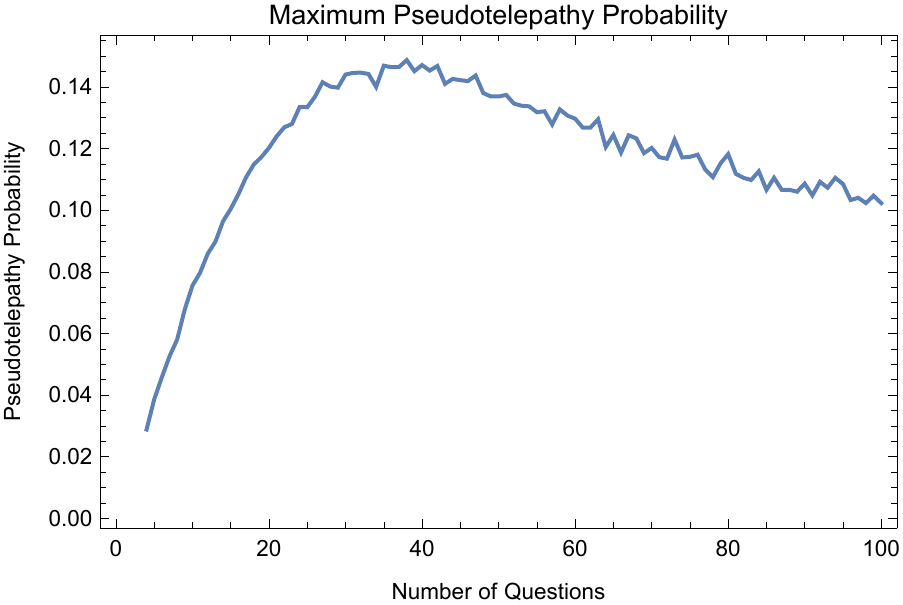}
}
\captionsetup{width=0.98\textwidth}
\caption{An investigation of the maximum probability of a random XOR game being pseudotelepathic as a function of question number. Figure (a) gives the number of clauses which maximizes this probability at each fixed question number. We also give a linear fit. The equation of the fit is 
    $$m = - 2.54013 +2.7405 n .$$
Figure (b) gives the probability that a random gave is pseudotelepathic at fixed question number when the number of clauses is set to the maximizing value given in figure (a). For both figures sample size is 50K for $n\leq 32$ and 10K for $n > 32$.
}
\end{figure*}

\FloatBarrier

\section{Discussion}

\label{sec:Discussion}

In this section we summarize the main conclusions of this paper and suggest avenues for future work.  \\

\noindent 
 {\bf Pseudotelepathy and Phase Transitions in Random 3XOR Games: } 
The numerical section of this paper focused on the probability that a randomly generated $n$ question, $m$ clause 3XOR game was a pseudotelepathy game. We now denote this probability $\rho(n,m)$. Our main observations were as follows: 
\begin{enumerate}
    \item At all fixed values of $n$ (except for $n$ very small), the maximum pseudotelepathy probability
    \begin{align}
        \mu(n) := \max_m \rho(m,n)
    \end{align}
    occurs when $m/n \approx 2.74$ (\Cref{fig:Max_Pseudotel_Posn}).
    \item The overall maximum pseudotelepathy probability appears bounded 
    \begin{align}
        \max_n \mu(n) \lesssim 0.14
    \end{align}
    by a maximum occurring at $n \approx 38$, and falls off at larger values on $n$ (\Cref{fig:Max_Pseudotel_Prob}).
\end{enumerate}
Based on these observations and  \Cref{fig:heatQ,fig:heatC} we make the following conjecture:
\begin{enumerate}[resume]
    \item \label{item:conjecture}
    Conjecture: XOR games have a \qperf phase transition and its location $C_q$ coincides with the location $C_c$ of the \cperf phase transition: $C_q = C_c \approx 2.74$.
\end{enumerate}
  Indeed (by the argument near the end of \Cref{sec:phaseTransTelepathy}), this conjecture is true unless $\mu(n)$ increases to 1.
  We suggest proving the the conjecture above as an open question for future work. \\

\noindent
{\bf Perfect Games Algorithm and MERP Strategies:} \Cref{alg:QandC_perf} gives an efficient method of identifying 3XOR games with perfect quantum strategies, and describing those strategies. It allows the classification of games with hundreds of questions and clauses -- a regime far beyond techniques such NPA or exhaustive search.

We remark here that \Cref{alg:QandC_perf} also extends naturally beyond 3XOR games to identify $k$-XOR games with perfect MERP strategies. When $k > 3$ games may have perfect quantum strategies without perfect MERP strategies (\cite{watts2018algorithms}), so this algorithm no longer classifies all \Qperf games. Still, we hope that it will serve as a starting point for future insights about $k$-XOR games.\\

\noindent
{\bf Comparison Problems:}
Finally, we note this paper introduces what seems to be a new class of mathematical questions. Consider a equation of the form 
\begin{align}
    \Gamma x = S
\end{align}
where $\Gamma \in \mathbb{Z}_2^{m \times 3n}$ and $S \in \mathbb{Z}_2^{m \times 1}$ are generated randomly, perhaps with special structure (as with 3XOR). Questions about the satisfiability of  this system of equations over $\mathbb{Z}_2$ are common in computer science cf.~\cite{friedgut2005hunting}.
The crux of this paper is comparing this to satisfiability of the system of equations  
\begin{align}
    \Gamma x = S \pmod{2}
\end{align}
over $\mathbb{R}$. In the examples considered in this paper we found key probabilistic behavior of the randomly generated equations (for example the existence and location of the satisfiability phase transition) remained consistent when the equations were studied over $\mathbb{Z}_2$ and $\mathbb{R}$. Further understanding this phenomenon -- as well as characterizing when the behavior of the randomly generated equations differs -- seems to be a mathematically rich question deserving of further study.

\bigskip

\begin{acknowledgments}
We are grateful to CMSA for facilitating the Collaboration of the first two authors.
Thanks to Zinan Hu and Jared Hughes for help with calculations
and to Kiran Kedlaya for discussions about Diophantine equations. We also wish to thank Daniel Lichtbau from Wolfram Research for assistance with Mathematica. ABW was supported in part by NSERC DG 2018-03968.
\end{acknowledgments}

\bibliography{ref}

\newcommand{\etalchar}[1]{$^{#1}$}
\begin{thebibliography}{WHKN18}

\bibitem[BBT05]{brassard2005quantum}
Gilles Brassard, Anne Broadbent, and Alain Tapp.
\newblock Quantum pseudo-telepathy.
\newblock {\em Foundations of Physics}, 35(11):1877--1907, 2005.

\bibitem[Bel64]{bell1964einstein}
John~S Bell.
\newblock On the einstein podolsky rosen paradox.
\newblock {\em Physics Physique Fizika}, 1(3):195, 1964.

\bibitem[big18]{big2018challenging}
Challenging local realism with human choices.
\newblock {\em Nature}, 557(7704):212--216, 2018.

\bibitem[BLS21]{birmpilis2021fast}
Stavros Birmpilis, George Labahn, and Arne Storjohann.
\newblock A fast algorithm for computing the smith normal form with multipliers
  for a nonsingular integer matrix.
\newblock {\em arXiv preprint arXiv:2111.09949}, 2021.

\bibitem[CCC93]{cohen1993course}
Henri Cohen, Henry Cohen, and Henri Cohen.
\newblock {\em A course in computational algebraic number theory}, volume~8.
\newblock Springer-Verlag Berlin, 1993.

\bibitem[Eke91]{QKD2:ekert1991quantum}
Artur~K Ekert.
\newblock Quantum cryptography based on bell’s theorem.
\newblock {\em Physical review letters}, 67(6):661, 1991.

\bibitem[ER{\etalchar{+}}60]{erdos1960evolution}
Paul Erdos, Alfr{\'e}d R{\'e}nyi, et~al.
\newblock On the evolution of random graphs.
\newblock {\em Publ. Math. Inst. Hung. Acad. Sci}, 5(1):17--60, 1960.

\bibitem[Fri05]{friedgut2005hunting}
Ehud Friedgut.
\newblock Hunting for sharp thresholds.
\newblock {\em Random Structures \& Algorithms}, 26(1-2):37--51, 2005.

\bibitem[JNV{\etalchar{+}}20]{ji2020mip}
Zhengfeng Ji, Anand Natarajan, Thomas Vidick, John Wright, and Henry Yuen.
\newblock {MIP}$^*$={RE}.
\newblock {\em Preprint}, 2020.
\newblock \url{https://arxiv.org/abs/2001.04383}.

\bibitem[KB79]{kannan1979polynomial}
Ravindran Kannan and Achim Bachem.
\newblock Polynomial algorithms for computing the smith and hermite normal
  forms of an integer matrix.
\newblock {\em siam Journal on Computing}, 8(4):499--507, 1979.

\bibitem[PS16]{pittel2016satisfiability}
Boris Pittel and Gregory~B Sorkin.
\newblock The satisfiability threshold for k-xorsat.
\newblock {\em Combinatorics, Probability and Computing}, 25(2):236--268, 2016.

\bibitem[RHH{\etalchar{+}}18]{rauch2018cosmic}
Dominik Rauch, Johannes Handsteiner, Armin Hochrainer, Jason Gallicchio,
  Andrew~S Friedman, Calvin Leung, Bo~Liu, Lukas Bulla, Sebastian Ecker, Fabian
  Steinlechner, et~al.
\newblock Cosmic bell test using random measurement settings from high-redshift
  quasars.
\newblock {\em Physical review letters}, 121(8):080403, 2018.

\bibitem[RUV13]{DQC:reichardt2013classical}
Ben~W Reichardt, Falk Unger, and Umesh Vazirani.
\newblock Classical command of quantum systems.
\newblock {\em Nature}, 496(7446):456--460, 2013.

\bibitem[Sch98]{schrijver1998theory}
Alexander Schrijver.
\newblock {\em Theory of linear and integer programming}.
\newblock John Wiley \& Sons, 1998.

\bibitem[SL96]{storjohann1996asymptotically}
Arne Storjohann and George Labahn.
\newblock Asymptotically fast computation of hermite normal forms of integer
  matrices.
\newblock In {\em Proceedings of the 1996 international symposium on Symbolic
  and algebraic computation}, pages 259--266, 1996.

\bibitem[WH20]{watts20203xor}
Adam~Bene Watts and J.~William Helton.
\newblock 3{XOR} games with perfect commuting operator strategies have perfect
  tensor product strategies and are decidable in polynomial time.
\newblock {\em Preprint}, 2020.
\newblock \url{https://arxiv.org/abs/2010.16290}.

\bibitem[WHKN18]{watts2018algorithms}
Adam~Bene Watts, Aram~W Harrow, Gurtej Kanwar, and Anand Natarajan.
\newblock Algorithms, bounds, and strategies for entangled {XOR} games.
\newblock {\em Preprint}, 2018.
\newblock \url{https://arxiv.org/abs/1801.00821}.

\bibitem[Yeo92]{yeomans1992statistical}
Julia~M Yeomans.
\newblock {\em Statistical mechanics of phase transitions}.
\newblock Clarendon Press, 1992.

\end{thebibliography}

\newpage 

\def\bbZ{\mathbb Z}

\section{Appendix: MERP Strategies}
\label{app:duality}

This appendix gives a more detailed overview of MERP strategies. It repeats some parts of \Cref{sec:XOR_games_and_algorithm} in order to be nearly  self contained.

We consider a 3 player XOR game $\game$ with clauses $(a_i,b_i,c_i,s_i)$ for $i \in [m]$ and defining equations
\begin{align}
    \Gamma z = S \pmod{2}
\end{align}

\ssec{MERP strategies for 3XOR}
A MERP strategy for such a game is a strategy in which players share a 3 qubit GHZ state and player $\alpha$ produces a response to question $i$ by measuring the observable 
\begin{align}
    \exp(2 \pi i \sigma_z z^{(\alpha)}_i)  \sigma_x \exp(- 2 \pi i \sigma_z z^{(\alpha)}_i)
\end{align}
applied to their qubit of the GHZ state. 
In this discussion $\sigma_x, \sigma_z$ denote the Pauli X and Z operators, and we are using multiplicative notation to describe responses, so a ``-1'' outcome corresponds to a 1 response in the standard (additive) notation, and a ``1'' outcome corresponds to a 0 response. The variables $z^{(\alpha)}_i$ can take any value in $\mathbb{R}$ and depend on both the player $\alpha$ and the question $i$ they are sent. 

If players are sent questions $(i_1, i_2, i_3)$ the expected value of the product of their responses (again using multiplicative notation) is given by the observable
\begin{align}
    \exp(2 \pi i \sigma_x z^{(1)}_{i_1}) \otimes \exp(2 \pi i \sigma_x z^{(2)}_{i_2}) \otimes \exp(2 \pi i \sigma_x z^{(3)}_{i_3})
\end{align}
applied to the GHZ state. 

Straightforward calculation (see Claim 5.27 of \cite{watts2018algorithms}) then shows that the score a MERP strategy achieves on the game $\game$ is given by the function
\begin{align}
\label{eq:merpval2}
    \frac{1}{2} + \frac{1}{2m} \sum_{i = 1}^m \cos\left(\pi \left(z^{(1)}_{a_i} + z^{(2)}_{b_i} +  z^{(3)}_{c_i} - s_i\right)\right) .
\end{align}
This implies the next lemma:
\begin{lemma}
An 3XOR game with clauses $(a_i,b_i,c_i,s_i)$ for $i \in [m]$ has a perfect MERP strategy iff there exist variables $z^{(\alpha)}_{j} \in \mathbb{R}$ satisfying the equations
\begin{align}
    z^{(1)}_{a_i} + z^{(2)}_{b_i} +  z^{(3)}_{c_i} = s_i \pmod{2}
\end{align}
for all $i \in [m]$, or, equivalently, 
if the defining equations of $\game$
\begin{align} \label{eq:merp_defeq_app}
  \Gamma  z
    = 
    S \pmod{2}.
\end{align}
have a solution with $z$ taking values in $\mathbb{R}^{3n}$. 

\end{lemma}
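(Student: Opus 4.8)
The plan is to read everything off the MERP score formula \eqref{eq:merpval2}. First I would observe that, since $\cos \le 1$, the score of \emph{any} MERP strategy is at most $\tfrac12 + \tfrac12 = 1$, with equality precisely when every summand $\cos\!\bigl(\pi(z^{(1)}_{a_i} + z^{(2)}_{b_i} + z^{(3)}_{c_i} - s_i)\bigr)$ equals $1$. Hence a perfect MERP strategy — a MERP strategy achieving score exactly $1$ — exists if and only if there is a choice of the real parameters $z^{(\alpha)}_j$ that makes all $m$ cosine terms equal to $1$ at once.

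Next I would invoke the elementary fact that for $t \in \mathbb{R}$ one has $\cos(\pi t) = 1$ iff $t$ is an even integer. Applying this with $t = z^{(1)}_{a_i} + z^{(2)}_{b_i} + z^{(3)}_{c_i} - s_i$ shows the $i$-th cosine term is $1$ iff $z^{(1)}_{a_i} + z^{(2)}_{b_i} + z^{(3)}_{c_i} - s_i$ is an even integer, which is exactly the congruence $z^{(1)}_{a_i} + z^{(2)}_{b_i} + z^{(3)}_{c_i} = s_i \pmod 2$ read over $\mathbb{R}$. Conjoining over $i \in [m]$ gives the first characterization. For the ``equivalently'' clause I would then just unpack notation: collecting the $z^{(\alpha)}_j$ into a vector $z \in \mathbb{R}^{3n}$ and recalling that the $i$-th rows of $A, B, C$ are the indicator rows of $a_i, b_i, c_i$, the system $\{\, z^{(1)}_{a_i} + z^{(2)}_{b_i} + z^{(3)}_{c_i} = s_i \pmod 2 \,\}_{i \in [m]}$ is literally $\Gamma z = S \pmod 2$ with $\Gamma = (A\ B\ C)$, so the two conditions coincide.

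The only step requiring a little care is the one that promotes the \emph{upper} bound ``score $\le 1$'' to an \emph{exact} statement: one must confirm that \eqref{eq:merpval2} is the score of a genuine MERP strategy for \emph{every} real assignment of the $z^{(\alpha)}_j$, so that a solution of the congruences does produce an actual strategy attaining $1$. This is exactly the content of Claim 5.27 of \cite{watts2018algorithms} quoted above. With that in hand there is no hidden limiting or compactness argument to make, since ``perfect'' here means ``attains value $1$'' rather than ``attains value arbitrarily close to $1$''. I expect this bookkeeping, rather than any genuine inequality, to be the whole of the work.
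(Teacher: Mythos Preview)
Your proposal is correct and is exactly the argument the paper intends: the paper presents the lemma with the single line ``This implies the next lemma'' immediately after the score formula \eqref{eq:merpval2}, and your write-up simply unpacks that implication (each cosine term is bounded by~$1$ with equality iff its argument is in $2\pi\mathbb{Z}$, then rewrite in matrix form). There is nothing more in the paper's own proof.
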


The main result of \cite{watts20203xor} shows that a 3XOR game has a perfect quantum strategy  iff the 3XOR game has a perfect MERP strategy.\footnote{Actually, the main result of \cite{watts20203xor} shows a 3XOR game has commuting operator value 1 iff the game has a perfect MERP strategy -- a strictly stronger statement.} This implies an immediate strengthening of the lemma above:
\begin{corollary}
\label{cor:perfect_3XOR_merp}
    An 3XOR game with clauses $(a_i,b_i,c_i,s_i)$ for $i \in [m]$ has a perfect quantum strategy iff there exist variables $z^{(\alpha)}_{j} \in \mathbb{R}$ satisfying the equations 
\begin{align}
    z^{(1)}_{a_i} + z^{(2)}_{b_i} +  z^{(3)}_{c_i} = s_i \pmod{2}
\end{align}
for all $i \in [m]$, or, equivalently, 
if the defining equations of $\game$ 
\begin{align}
    \label{eq:merp_matrix_SOE}
  \Gamma z 
    = 
    S \pmod{2}.
\end{align}
have a solution with $z$ taking values in $\mathbb{R}^{3n}$. 
\end{corollary}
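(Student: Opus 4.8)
The plan is to deduce \Cref{cor:perfect_3XOR_merp} by composing two statements already in hand: the Lemma just proved, which characterizes ``$\game$ has a perfect MERP strategy'' as ``the defining equations $\Gamma z = S \pmod 2$ have a solution over $\mathbb{R}^{3n}$'', and the main theorem of \cite{watts20203xor}, which characterizes ``$\game$ has a perfect quantum strategy'' as ``$\game$ has a perfect MERP strategy''. So I would just prove the two implications of the biconditional separately, with essentially all of the work deferred to the cited theorem.

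For the direction ``real solution of the defining equations $\Rightarrow$ perfect quantum strategy'', the argument is immediate: the Lemma turns the real solution into a perfect MERP strategy, and a MERP strategy \emph{is} a finite-dimensional quantum strategy --- three qubits of a shared GHZ state, with the projective measurements displayed in the Lemma --- so $\game$ is \qperf. (The same strategy is also a legal commuting-operator strategy, so this direction simultaneously yields the commuting-operator strengthening noted in the footnote.) For the converse, ``perfect quantum strategy $\Rightarrow$ real solution'', the Lemma again reduces the goal to producing a perfect MERP strategy, and this is exactly what \cite{watts20203xor} establishes; I would cite it directly. For orientation: that proof sets up a system of linear equations over $\mathbb{R}$ ``dual'' to the defining equations, shows this dual system is solvable precisely when the defining equations over $\mathbb{R}$ (mod $2$) are \emph{not} solvable, and then converts a dual solution into a proof --- in the NPA/ncSoS framework --- that the quantum (and even commuting-operator) value of $\game$ is strictly less than $1$; under the hypothesis that $\game$ is \qperf this is a contradiction, so the defining equations must have a real solution. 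Finally, the two phrasings in the corollary (clause-by-clause equations versus $\Gamma z = S \pmod 2$ over $\mathbb{R}^{3n}$) denote the same system, since each row of $\Gamma$ carries exactly one Alice, one Bob, and one Charlie variable.

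The one genuinely substantial ingredient --- the step I would cite rather than reprove --- is the passage in \cite{watts20203xor} from a solution of the dual linear system to an operator sum-of-squares / NPA certificate bounding the game value below $1$. Everything else is bookkeeping: ``MERP $\subseteq$ quantum'' for one direction, and the Lemma together with linear duality over $\mathbb{R}$ for the other.
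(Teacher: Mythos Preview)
Your proposal is correct and matches the paper's own argument essentially verbatim: the paper simply states that the main result of \cite{watts20203xor} (perfect quantum strategy iff perfect MERP strategy) combined with the preceding Lemma (perfect MERP strategy iff real solution to $\Gamma z = S \pmod 2$) yields the corollary. Your additional orientation on how the cited theorem works is extra detail, but the structure and citations are identical.
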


\subsection{MERP for kXOR}
A MERP strategy for kXOR has the same form as above but tensored
with an appropriate number of $I_2$.
Also the value of a MERP strategy is given by the expected  natural generalization of \Cref{eq:merpval} hence a \qperf
MERP strategy again  corresponds to linear equations 
in analogy with \Cref{eq:merp_defeq_app} with solutions obtainable via \Cref{thm:QperfHerm_app}.

\section{Appendix: Solving the \perfeq}

In this section we prove correctness of two algorithms for solving the defining equations of a game over $\mathbb{Z}_2$ and $\mathbb{R}$.
The first is based on the Smith Normal Form (SNF) of a matrix and gives a clear comparison between \qperf and \cperf games. The second algorithm is faster, though perhaps less conceptually clear. It is based on the Hermite Normal Form (HNF) of a matrix.

\subsection{A MERP solution from the Smith Normal Form}
\label{sec:merpLinSoln}

\def\RR{{\mathbb R}}
\def\bbZ{{\mathbb Z}}

Here we give an algorithm for solving \Cref{eq:merp_matrix_SOE} over $\mathbb{R}^{3n}$ and $\mathbb{Z}_2^{3n}$. This algorithm determines whether or not a 3XOR game has a \qperf or \cperf strategy and provides a description of this strategy when it exists.

We begin by recalling the some properties of the Smith Normal Form of a matrix. The Smith Normal Form of an $m \times 3n$ integer valued matrix $M$ is given by a 3-tuple of integer valuled matrices $(\Omega, D, \Psi)$ where
\begin{enumerate}
\item $M = \Omega D \Psi$
\item 
$\Omega$ is $m \times m$ with $ det = \pm 1$. 
\item 
$\Psi$ is $3n \times 3n$ with $det=\pm 1$. 
\item 
$D = \text{Diag}(d_1, d_2, \dots d_r, 0, \dots, 0)$ is $m \times 3n$.
The $d_j$ are called the invariants of $\Gamma$
\item
For each invariant $d_i$, $d_i$ divides $d_{i+1}$.
\end{enumerate}
The Smith Normal form of a matrix always exists and can be computed in time polynomial in the size of the matrix and the bit complexity of the largest integer entry~\cite{kannan1979polynomial}. Finding fast algorithms for computing the SNF of a matrix remains an active research area~\cite{birmpilis2021fast}. 

Next, we show that we can use the Smith Normal Form of the matrix $\Gamma$ to easily determine whether or not the defining equations of a 3XOR game have real or binary solutions, and compute those solutions when they exist. 

\begin{theorem}
 Let $\game$ be a 3XOR game with defining equation 
 \begin{align}
     \Gamma z = S \pmod{2}
 \end{align}
 and let $(\Omega, D, \Psi)$ be the Smith Normal Form of the matrix $\Gamma$. Then
 \begin{enumerate} 
 \item
 The game $\game$ has a perfect quantum strategy iff 
 $$   (\Omega^{-1} S)_j \text{ is even}   \quad  \text{ whenever } d_j =0$$
in which case any real solution to the defining equations of $\game$ satisfies $z= \Psi^{-1} y$ with $y_i$ a rational number
given by
\begin{align}
\label{eq:solLinMERP}
     y_i = \frac 1 d_i ( \Omega^{-1} S + 2 {\mathbb N})_i \quad \text{ whenever }  d_i \not =0,
\end{align}

 \item 
 The game $\game$ has a perfect  classical strategy, iff
 $$ (\Omega^{-1}S)_j \text{ is even } \quad \text{ whenever } d_j \text{ is even.} $$
\noindent In which case a binary solution to the defining equations is given by $z = \Psi^{-1} \mathbf{1}$ where $\mathbf{1}$ is the all ones vector and $\Psi^{-1}$ now denotes the inverse of $\Psi^{-1}$ over the field~$\mathbb{F}_2$. 
 
 \end{enumerate}
\end{theorem}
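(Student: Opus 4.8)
The plan is to diagonalize the system with the change of variables $y:=\Psi z$, solve it coordinate by coordinate over the relevant ring, and then transport the answer back through $\Psi^{-1}$. The only structural input needed is that, since $\det\Omega=\pm1$ and $\det\Psi=\pm1$, the matrices $\Omega^{-1}$ and $\Psi^{-1}$ are again \emph{integer} matrices, and their mod-$2$ reductions are invertible over $\mathbb F_2$ (an odd determinant is a unit), the reduction of $\Omega^{-1}$ being the inverse of $\overline\Omega$ and likewise for $\Psi$. Using $\Gamma=\Omega D\Psi$, for $z\in\mathbb R^{3n}$ the condition $\Gamma z\equiv S\pmod 2$ (meaning $\Gamma z-S\in 2\mathbb Z^m$) is equivalent to $D(\Psi z)\equiv\Omega^{-1}S\pmod 2$: multiplying a congruence mod $2$ by the integer matrix $\Omega^{-1}$ is legitimate, and reversible via $\Omega$. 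Setting $y=\Psi z$, which is a bijection of $\mathbb R^{3n}$ and, after reduction, of $\mathbb F_2^{3n}$, the problem reduces to analyzing $Dy\equiv\Omega^{-1}S\pmod 2$ and then writing $z=\Psi^{-1}y$.

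For part 1, recall (\Cref{cor:perfect_3XOR_merp}, or the MERP discussion earlier) that $\game$ has a perfect quantum strategy iff its defining equations have a solution over $\mathbb R$. Since $D=\mathrm{Diag}(d_1,\dots,d_r,0,\dots,0)$, the congruence $Dy\equiv\Omega^{-1}S\pmod 2$ decouples into the scalar congruences $d_iy_i\equiv(\Omega^{-1}S)_i\pmod 2$. For $d_i\neq0$ this is solvable over $\mathbb Q$ (hence $\mathbb R$) with solution set exactly $y_i=\tfrac1{d_i}\bigl((\Omega^{-1}S)_i+2t\bigr)$, $t\in\mathbb Z$ — the displayed formula; for $d_i=0$ it reads $0\equiv(\Omega^{-1}S)_i\pmod 2$, satisfiable iff $(\Omega^{-1}S)_i$ is even. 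Hence a real solution exists iff $(\Omega^{-1}S)_j$ is even for every $j$ with $d_j=0$, and every real solution $z=\Psi^{-1}y$ has its constrained coordinates $y_i$ (those with $d_i\neq0$) of the stated rational form.

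For part 2, reduce the identity $\Gamma=\Omega D\Psi$ mod $2$ and run the identical argument over $\mathbb F_2$, with $\overline D=\mathrm{Diag}(d_1\bmod 2,\dots,d_r\bmod 2,0,\dots,0)$. The scalar equation $(d_i\bmod 2)\,y_i=(\Omega^{-1}S)_i\bmod 2$ is automatically solvable — and fixes $y_i$ — when $d_i$ is odd, and when $d_i$ is even (the zero invariants included) it is solvable iff $(\Omega^{-1}S)_i$ is even. So $\game$ is \cperf (equivalently, the defining equations have a $\mathbb Z_2$ solution) iff $(\Omega^{-1}S)_j$ is even whenever $d_j$ is even. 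An explicit deterministic strategy is then recovered as $z=\overline\Psi^{-1}y$ after pinning down the forced coordinates $y_i=(\Omega^{-1}S)_i\bmod 2$ (for $d_i$ odd) and choosing the free ones arbitrarily; one then checks directly, by substituting into $\Gamma z=\Omega D\Psi z$ and comparing with $S$ entry by entry, which such choice reproduces the closed form quoted in the statement.

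The work is mostly bookkeeping rather than one deep step. The point needing care is keeping the ring-switching airtight — in both the $\mathbb R$ and the $\mathbb F_2$ settings the passage between $\Gamma z\equiv S$ and $Dy\equiv\Omega^{-1}S$ uses exactly the unimodularity of the Smith factors — and keeping straight the two different coordinate partitions: ``$d_i=0$ versus $d_i\neq0$'' governs the quantum/real case, while ``$d_i$ even versus odd'' governs the classical/binary case, and the gap between them (invariants that are nonzero but even) is precisely the source of pseudotelepathy. I expect the fiddliest point to be confirming that the concrete vector $\mathbf 1$ (rather than merely some vector) delivers a valid deterministic strategy, i.e. determining exactly the compatibility between $\Omega^{-1}S$ and the invariants $d_i$ under which the quoted closed form holds.
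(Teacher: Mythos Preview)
Your approach is essentially identical to the paper's: substitute $y=\Psi z$, use the unimodularity of $\Omega$ to reduce $\Gamma z\equiv S$ to the diagonal system $Dy\equiv\Omega^{-1}S\pmod 2$, and read off the solvability conditions coordinate by coordinate over $\mathbb R$ and over $\mathbb F_2$ respectively. On the point you flag as fiddly, the paper simply observes that $y=\mathbf 1$ satisfies the diagonal system exactly when $(\Omega^{-1}S)_j\equiv d_j\pmod 2$ for every $j$; your caution there is well placed, since this is strictly stronger than the solvability condition stated in the theorem (it also forces $(\Omega^{-1}S)_j$ to be odd when $d_j$ is odd), so the closed form $z=\Psi^{-1}\mathbf 1$ is not guaranteed under the hypothesis as literally written.
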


\begin{proof}

Starting with the Smith Decomposition and letting $\mathbb{N}$ denote an arbitrary vector of integers we can rewrite the defining equations of $\game$ as 
\begin{align}
     \Omega D y := \Omega D \Psi z= S  \pmod{2}.
\end{align}
Where we defined the vector $y := \Psi z$. Because $\Omega$ is invertible with integer inverse ($\det(\Omega) = \pm 1$) this equation is equivalent to 
\begin{align}
    D y = \Omega^{-1} S \pmod{2}. \label{eq:definingEquationSNF}
\end{align}
Now it is clear a solution with $y \in \mathbb{R}^{3n}$ exists iff 
\begin{align}
    (\Omega^{-1} S)_j \text{ is  even }   \quad  \text{ whenever } d_j =0
\end{align}
and when this solution exists it is given by \Cref{eq:solLinMERP}. Noting that $\Psi$ is invertible and setting $z = \Psi^{-1} y$ shows that this solution also implies a real solution to the defining equations of $\game$, as claimed.

To understand binary solutions to the defining equations of the game $\game$ note that \Cref{eq:definingEquationSNF} has a solution with an integer (or equivalently, binary) vector $y$ iff 
\begin{align}
   (\Omega^{-1} S)_j =  d_j \pmod{2}
\end{align}
for all $j$, in which case taking $y = \mathbf{1}$ gives a solution. Then letting $\Psi^{-1}$ denote the inverse of $\Psi$ over the field $\mathbb{F}_2$ and taking $z = \Psi^{-1} y = \Psi^{-1} \mathbf{1}$ gives a binary solution to the defining equations, as desired. 

\end{proof}
 
\subsection{A MERP solution from Hermite normal form}

The row-style Hermite Normal Form (HNF) of an integer matrix is 
the version of its row reduction which preserves integers all matrix entries
being integers. It is easier to compute than the SNF of a matrix. In this section we show it is easy to solve the defining equations of an XOR game of $\mathbb{R}$ once the matrix $\begin{pmatrix} \Gamma & S \end{pmatrix}$ has been put in HNF.

\subsubsection{Hermite Background}

\label{app:Hermite_Background}

We begin by reviewing the definition of the Hermite Normal Form of a matrix. The HNF of an integer valued $d_1 \times d_2$ matrix $M$ is defined by a tuple of matrices $(\Omega, H)$ with $\Omega$ having dimensions $d_1 \times d_1$, $H$ having dimensions $d_1 \times d_2$ and $(\Omega, H)$ satisfying the restrictions (adapted from Definition 2.4.2 of~\cite{cohen1993course}):
\begin{enumerate}

\item  $M=\Omega H$.

\item $\det(\Omega) = \pm 1$.

\item Any zero rows of $H$ occur at the bottom of $H$.

\item For any nonzero row of $H$ the first nonzero entry, called a pivot, occurs at position $h_{i, f(i)}$ where $f(i)$ is a strictly increasing function of $i$. 

\item For any nonzero row $i$ we have $h_{i,f(i)} \geq 1$ and $0 \leq h_{i',f(i)} < h_{i, f(i)}$ for all $i' < i$.

\end{enumerate}

The Hermite Normal Form of an integer valued matrix matrix always exists, and can be computed in time polynomial in the size of the matrix and the bit complexity of the largest integer entry of the matrix~\cite{kannan1979polynomial,storjohann1996asymptotically}. 
 
 \def\bbZ{{\mathbb Z}}
 \def\cE{{\mathcal E}}
 \def\eps{{\varepsilon}}
 
 \subsubsection{A Hermite NF algorithm}
 
  The next theorem gives a way to check if a 3XOR game $\game$ has a \qperf or \cperf strategy, and to find these strategies when they exist. We also show this algorithm characterizes when games have multiple distinct perfect MERP strategies. (Though we note this condition doesn't relate immediately to self-testing, both because it only considers MERP strategies, and because two MERP strategies which involve different rotation angles and thus appear ``distinct'' may be equivalent after isometry).

 \begin{theorem}
 \label{thm:QperfHerm_app}
 Let $\game$ be a 3XOR game with defining equation 
 \begin{align}
       \Gamma z = S \pmod{2}
 \end{align}
  Denote the row Hermite Normal Form
  of the matrix $(\Gamma \ S)$  by $(\Omega, H)$ with
  \begin{align}
  \label{eq:hermDec}
 H= \begin{pmatrix}
      R  & \bo \\
      0  & \bt  
 \end{pmatrix} 
 \end{align}
  where $\bo$ and $\bt$ are column vectors and the matrix $R$ contains all the rows of $H$ with nonzero pivots except for the one possibly contained in $\bt$. Only the first entry of $b^2$ can be non zero.\footnote{This follows immediately 
 from the definition of the HNF.}

  Then the defining equations of $\game$:
    \begin{enumerate}
  \item
  \label{it:QperfHerm}
  have a rational (or real) solution
  iff all entries of 
  the vector 
$\bt$ are  even entries. In this case, the solution is given by any vector $z$ which solves the equation 
 \begin{align}
   Rz = \bo \pmod{2}.
 \end{align}
 Thus $\bt$ being an even vector is equivalent to the game 
$\game$ being \qperf.

\item 
 \label{it:CperfHerm}
 has a binary solution iff the vector $\bt$ has even entries and the equation 
\begin{align}
    R z = \bo \pmod{2}
\end{align}
has a solution with $z \in \mathbb{Z}_2$.
\end{enumerate}
Finally, we note that this characterization lets us determine when $3XOR$ games have unique MERP solutions. 
\begin{enumerate}[resume]
  \item
  \label{it:QperfHermUniq}
 A 3XOR game has a unique perfect MERP
 strategy iff $\bt$ is an even vector
 and $R$ is a square $3n \times 3n $
 matrix with $\det(R) = \pm 1 $.
 Since $R$ is triangular with integer entries  $\det(R) = \pm 1 $
 is equivalent to each diagonal entry
 being $\pm 1$.
 \end{enumerate}
  
 \end{theorem}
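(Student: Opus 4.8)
I would handle all three parts through a single reduction: left multiplication by $\Omega^{-1}$ converts the defining equations into the block-triangular system read off from $H$, and since $\Omega^{-1}\in\mathrm{GL}_m(\mathbb{Z})$ this conversion is invertible and respects congruence modulo $2$. Concretely, from $(\Gamma\ S)=\Omega H$ and $\det\Omega=\pm1$ we get $\Omega^{-1}\Gamma=\left(\begin{smallmatrix}R\\0\end{smallmatrix}\right)$ and $\Omega^{-1}S=\left(\begin{smallmatrix}\bo \\ \bt\end{smallmatrix}\right)$. I would first record that for any vector $v$, over $\mathbb{R}$, over $\mathbb{Z}$, or over $\mathbb{Z}_2$, one has $v\equiv 0\pmod{2}\iff\Omega^{-1}v\equiv 0\pmod{2}$: one direction uses that $\Omega^{-1}$ has integer entries, the other that $\Omega$ does (over $\mathbb{Z}_2$ this is just that $\Omega\bmod 2$ is invertible, since $\det\Omega$ is odd). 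Applying this with $v=\Gamma z-S$ gives, over each of these rings,
\[
\Gamma z\equiv S\pmod{2}\iff\bigl(Rz\equiv\bo\pmod{2}\ \text{ and }\ \bt\equiv 0\pmod{2}\bigr),
\]
and the clause ``$\bt\equiv 0$'' just says every entry of $\bt$ is even.

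\emph{Parts \ref{it:QperfHerm} and \ref{it:CperfHerm}.} Over $\mathbb{R}$ (equivalently $\mathbb{Q}$): by the displayed equivalence, the defining equations have a real solution iff $\bt$ is even and $Rz\equiv\bo\pmod{2}$ has a real solution, and the latter always holds, since already $Rz=\bo$ is solvable over $\mathbb{Q}$ by back-substitution ($R$ has full row rank with its pivot columns forming an invertible upper-triangular block, so set the non-pivot coordinates to $0$). Hence a real solution exists iff $\bt$ is even, and whenever $\bt$ is even every $z$ with $Rz=\bo\pmod{2}$ is such a solution; invoking \Cref{cor:perfect_3XOR_merp}, which equates $\game$ being \qperf with the existence of a real solution of its defining equations, gives part \ref{it:QperfHerm}. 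Part \ref{it:CperfHerm} is the identical argument over $\mathbb{Z}_2$, where congruence mod $2$ is equality: a binary solution exists iff $\bt\equiv 0$ over $\mathbb{F}_2$ (i.e.\ $\bt$ even) together with solvability of $Rz=\bo$ over $\mathbb{Z}_2$.

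\emph{Part \ref{it:QperfHermUniq}.} Since the rotation angles enter \Cref{eq:merpval2} only through $\cos(\pi(\cdot))$, I would regard a perfect MERP strategy as a point of the torus $\mathbb{T}^{3n}:=(\mathbb{R}/2\mathbb{Z})^{3n}$ and call two such strategies distinct when their angle vectors differ in $\mathbb{T}^{3n}$. Assume $\bt$ is even (else there is nothing to count), and let $k$ be the number of rows of $R$. By the displayed equivalence the perfect MERP strategies form a coset $z_0+K$ in $\mathbb{T}^{3n}$, where $K$ is the kernel of $z\mapsto Rz$ viewed as a map $\mathbb{T}^{3n}\to\mathbb{T}^{k}$. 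If $k<3n$, some coordinate of $z$ is a non-pivot (free) variable ranging over all of $\mathbb{R}/2\mathbb{Z}$ inside $K$, so there are infinitely many strategies. If $k=3n$, the HNF structure forces $R$ to be $3n\times 3n$, upper triangular with its pivots on the diagonal, and then $K=2R^{-1}\mathbb{Z}^{3n}/2\mathbb{Z}^{3n}\cong\mathbb{Z}^{3n}/R\mathbb{Z}^{3n}$ is finite of order $|\det R|$. Thus the strategy is unique iff $k=3n$ and $|\det R|=1$; and since $R$ is triangular with integer (indeed positive, being HNF pivots) diagonal entries, $\det R=\pm1$ exactly when each diagonal entry is $\pm1$.

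\emph{Expected main obstacle.} Once the displayed equivalence is in place, parts \ref{it:QperfHerm} and \ref{it:CperfHerm} are routine. The delicate point is in part \ref{it:QperfHermUniq}: fixing the right notion of ``distinct MERP strategy.'' The observable attached to an angle $z^{(\alpha)}_i$ is in fact invariant under shifts finer than $2\mathbb{Z}$, and a coordinate corresponding to a question never asked of a given player is physically irrelevant, so counting solutions in $\mathbb{T}^{3n}$ could in principle over-count strategies. I would handle this by stating the convention explicitly, strategies counted as points of $\mathbb{T}^{3n}$ and $\Gamma$ assumed to have no zero columns, which is exactly the reading under which the $\det R=\pm1$ criterion is sharp; the remarks accompanying the theorem already flag that finer, isometry-level equivalences are outside the scope of this count.
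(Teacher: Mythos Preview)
Your proposal is correct and, for parts~\ref{it:QperfHerm} and~\ref{it:CperfHerm}, follows essentially the same line as the paper: the paper packages the reduction as $(\Gamma\ S)\bigl(\begin{smallmatrix}z\\-1\end{smallmatrix}\bigr)\equiv 0$ and then left-multiplies by $\Omega^{-1}$, while you apply $\Omega^{-1}$ directly to $\Gamma z-S$; these are the same step.

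For part~\ref{it:QperfHermUniq} your route differs. The paper argues in two stages: first, if $R$ has a nontrivial real kernel it exhibits a rescaled kernel vector giving a second solution in $(\mathbb{R}/2\mathbb{Z})^{3n}$; second, once $R$ is square and invertible, it shows uniqueness is equivalent to $R^{-1}$ mapping even integer vectors to even integer vectors, hence to $R^{-1}\in M_{3n}(\mathbb{Z})$, hence to $\det R=\pm1$. You instead pass to the torus at the outset, identify the solution set as a coset of $K=\ker(R:\mathbb{T}^{3n}\to\mathbb{T}^k)$, dispose of $k<3n$ by a free-coordinate argument, and for $k=3n$ compute $|K|=|\mathbb{Z}^{3n}/R\mathbb{Z}^{3n}|=|\det R|$. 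Your approach is a bit more structural and yields the exact count of MERP solutions (not just the dichotomy unique/non-unique), at the cost of invoking the lattice-index formula; the paper's argument is more elementary but only decides uniqueness. Your discussion of the ``expected obstacle'' is also on target: the paper adopts exactly the convention you propose, declaring $z_1$ and $z_2$ distinct iff $z_1\neq z_2\pmod 2$, and explicitly disclaims any isometry-level identification.
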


\begin{proof}
We begin by rewritting the determining equation of the game $\game$ 
\begin{align}
    \Gamma z = S \pmod{2}
\end{align}
as 
\begin{align}
    \begin{pmatrix}
         \Gamma &
         S
    \end{pmatrix} 
    \begin{pmatrix}
         z \\
         -1
    \end{pmatrix}
    = 0 \pmod{2}.
\end{align}
Now, using the transformation of $(\Gamma \ S)$ to Hermite normal form this equation becomes
\begin{align}
\label{eq:GSz}
    \Omega H \
    \begin{pmatrix}
         z \\
         - 1
    \end{pmatrix} 
    & = 0 \pmod{2} \\
    \Leftrightarrow \ \  
  \begin{pmatrix}
      R & \bo \\
      0 & \bt 
 \end{pmatrix}
    \begin{pmatrix}
         z \\
         - 1
    \end{pmatrix}   & = 0 \pmod{2}. \label{eq:defEqHNF}
\end{align}

We can now prove \Cref{it:QperfHerm,it:CperfHerm,it:QperfHermUniq}.

\begin{enumerate}
    \item Clearly, if a solution to \Cref{eq:defEqHNF} exists, then $\bt$ must be an even vector. Conversely, suppose $\bt$ is even. Then \Cref{eq:defEqHNF} becomes 
    \begin{align}
        R z\ -\bo \  = 0 \pmod{2}.  \label{eq:def_eq_HNF_btzero}
    \end{align}
    
    Since $R$ applied to $\mathbb{R}^{3n} $ has range onto we may take $z \in \mathbb{R}^{3n}$ to make $R z - \bo$ an even vector, giving a real solution to the defining equations. Since all the coefficients of the defining equations are integer a real solution exists iff a rational solution exists.
    
    \item By the same logic as (1), if $\bt$ is even then a binary vector $z$ solves the defining equations of $\game$ iff
    \begin{align}
       R z\ -\bo \  &= 0 \pmod{2} \\
        \Leftrightarrow Rz &= \bo \pmod{2}. 
    \end{align}
    
    \item 
    First note two real vectors $z_1$ and $z_2$ correspond to different MERP strategies iff $z_1 \neq z_2 \pmod{2}$. Then, assuming $\bt$ is even, the game $\game$ has two distinct MERP strategies iff there exist vectors $z_1, z_2$ in $\mathbb{R}^{3n}$ with 
    \begin{align}
         Rz_1 = Rz_2 = \bo \pmod{2}\label{eq:unique1}
    \end{align}
    but 
    \begin{align}
        z_1 \neq z_2 \pmod{2}. \label{eq:unique2}
    \end{align}
    So a game has two distinct perfect MERP strategies iff
    \begin{align}
        \label{eq:Rker}
        R(z_1 - z_2 ) = 0  \pmod{2}.
    \end{align}
    for two vectors $z_1 \neq z_2 \pmod{2}$.
    
    We first show that a 3XOR game $\game$ has a unique perfect MERP strategy only if $R$ has no kernel. Assume for contradiction that $\game$ has a unique MERP strategy and that $R$ has a kernel containing the vector $\mu \in \mathbb{R}^{3n}$.
    Then there is a $w \in \bbZ$ for which
    $\mu/2^w$ is not even but is in the kernel of $R$. Given any vector $z_1$ satisfying $Rz_1 = \bt$ we also have $R(z_1 + \mu/2^w) = \bt$, contradicting uniqueness by \Cref{eq:Rker}.

    We just showed 
    $R$ has no kernel and 
    also it has range onto, so $R$ has an
inverse over $\mathbb{R}$, which we denote~$R^{-1}$.
Let $\mathcal{E}$ denote the set of even vectors.
Uniqueness, thanks to 
\Cref{eq:Rker,eq:unique2}, is equivalent to the condition 
    $R^{-1} \cE \to \cE$, that is $R^{-1}$ maps even vectors to even vectors. 
    Dividing everything by $2$, this is equivalent to the condition
    \begin{align}
         R^{-1} \bbZ\to \bbZ
    \end{align}
    which happens iff all entries of $R^{-1}$ are integers (a non integer entry in the $j^{th}$ column can be ``exposed' by applying $R^{-1}$ to the $j^{th}$ standard basis vector). Then a straightforward argument shows that a matrix $M$ with integer entries has inverse with  integer  entries iff $\det(M) = \pm 1$. 
    This completes the proof.
    
\end{enumerate}
\end{proof}

\section{Appendix: Dual version of the \perfeq }

\label{app:dual_equations}

The polynomial time algorithm for identifying perfect 3XOR games proposed in~\cite{watts20203xor} was based on a system of equations ``dual'' to the defining equations
\begin{align}
  \Gamma z  =   S \pmod{2}. 
\end{align}
Here we review the theory behind these equations. An implementation of this ``dual'' algorithm (which is less efficient in practice than \Cref{alg:QandC_perf}) is described in \Cref{sec:MmaPrograms}.

The dual equations are:
    \begin{align}
        \label{eq:3XOR_dual_SOE}
        M \xi=b \qquad M:=
        \begin{pmatrix}
            \Gamma\adj & 0 \\
            S\adj & 2
        \end{pmatrix} \quad 
        b 
        &= 
        \begin{pmatrix}
        0 \\
        1
        \end{pmatrix}.
    \end{align}

\noindent Note the zero appearing in the definition of $M$ has dimensions $3n$ by $1$ and the zero appearing in the definition of $b$ has dimensions $m$ by $1$. 

The  quantum and classical satisfiability conditions dual to \Cref{thm:QperfHerm_app}
are given
in the following theorem, 
and were known earlier.

\begin{theorem}[\cite{watts2018algorithms,watts20203xor}]
    \label{thm:3XOR_solns}
     \mbox{} \\
     \begin{enumerate}
     \item \label{it:diophsoln}
     Finding a perfect quantum strategy for 
     $\cG$ 
     is not possible
     iff $M \xi=b$ has a solution $\xi$ whose entries are all 
      integers.
     \item 
     \label{it:z2soln}
     Finding a perfect classical strategy for $\cG$  is not possible
     iff $M \xi =b$ has a solution  $\xi \in {\mathbb Z}_2$.
         \end{enumerate}
    \end{theorem}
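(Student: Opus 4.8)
The plan is to prove \Cref{thm:3XOR_solns} by invoking the Fredholm-type alternative for systems of linear equations over a principal ideal domain ($\mathbb{Z}$) and over a field ($\mathbb{F}_2$), applied to the defining equations $\Gamma z = S \pmod 2$. The underlying principle is the classical duality statement: a system $Ax = b$ over a PID fails to have a solution exactly when there is a ``certificate of infeasibility'' living in the appropriate dual space. Here the subtlety is that our system lives modulo $2$, so I need to be careful about what ``solution over $\mathbb{R}$'' versus ``solution over $\mathbb{Z}$'' means for the dual object. Throughout I will freely use \Cref{cor:perfect_3XOR_merp}, which tells us that $\game$ is \qperf iff $\Gamma z = S \pmod 2$ has a real solution $z \in \mathbb{R}^{3n}$, and the earlier observation that $\game$ is \cperf iff $\Gamma z = S \pmod 2$ has a solution over $\mathbb{Z}_2$.

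First I would handle the classical case, item \ref{it:z2soln}, which is the cleaner one. Working entirely over $\mathbb{F}_2$, the system $\Gamma z = S$ has a solution iff $S \in \operatorname{rowspace}(\Gamma^{\dag})^{\perp \perp}$ in the usual sense, equivalently iff every vector $\eta \in \mathbb{F}_2^m$ with $\Gamma^{\dag} \eta = 0$ also satisfies $S^{\dag} \eta = 0$. The non-existence of a classical solution is thus equivalent to the existence of $\eta \in \mathbb{F}_2^m$ with $\Gamma^{\dag}\eta = 0$ and $S^{\dag}\eta = 1$. Setting $\xi = (\eta, 1)^{\dag}$ and reading off the block structure of $M$ (noting $2 = 0$ over $\mathbb{F}_2$), these two conditions are exactly $M\xi = b$ over $\mathbb{Z}_2$. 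This is the standard rank/orthogonality argument and should be only a few lines.

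Next, the quantum case, item \ref{it:diophsoln}, which is the main obstacle and where the factor of $2$ in $M$ does real work. By \Cref{cor:perfect_3XOR_merp}, $\game$ is not \qperf iff $\Gamma z = S \pmod 2$ has no real solution, i.e.\ there is no pair $(z, k)$ with $z \in \mathbb{R}^{3n}$, $k \in \mathbb{Z}^m$, and $\Gamma z = S + 2k$ over $\mathbb{R}$. The cleanest route is to put $(\Gamma \ S)$ (or $\Gamma$) into Smith/Hermite normal form, exactly as in \Cref{thm:QperfHerm_app}: non-existence of a real solution is governed by some integer combination $\bt$ of the entries of $S$ failing to be even. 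I would then argue that a certificate of this failure is precisely an integer vector $\eta$ with $\Gamma^{\dag}\eta \equiv 0 \pmod{?}$ --- more carefully, non-solvability of $\Gamma z = S + 2k$ over $\mathbb{R}\times\mathbb{Z}$ is, by duality over $\mathbb{Z}$ applied to the augmented integer system $\begin{pmatrix}\Gamma & -2I\end{pmatrix}\begin{pmatrix}z\\k\end{pmatrix} = S$ after clearing the real variable $z$ via its own Smith form, equivalent to the existence of an integer vector $\eta$ annihilating the $z$-columns (i.e.\ $\Gamma^{\dag}\eta = 0$, since $\Gamma$ has a right inverse over $\mathbb{Q}$ forces $\eta$ in the left kernel) while $S^{\dag}\eta$ is odd, i.e.\ $S^{\dag}\eta = 1 + 2\ell$ for some integer $\ell$, equivalently $S^{\dag}\eta - 2\ell = 1$. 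Writing $\xi = (\eta, \ell)^{\dag}$ (up to a sign on $\ell$, which I can absorb), these two conditions read off as $\Gamma^{\dag}\eta = 0$ and $S^{\dag}\eta + 2\ell = 1$, which is exactly $M\xi = b$ with integer $\xi$.

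The step I expect to be genuinely delicate is disentangling the two quantifiers in ``no $(z,k)$ with $z$ real, $k$ integer'': one wants to say ``no real $z$ and integer $k$'' is equivalent to ``the image of the lattice map $k \mapsto S + 2k$ misses the $\mathbb{Q}$-rowspace of $\Gamma$'', and then convert \emph{that} into an integer certificate. The Hermite normal form computation in \Cref{thm:QperfHerm_app} already isolates the obstruction as evenness of $\bt$, and $\bt$ is by construction an explicit integer row-combination of the columns of $(\Gamma \ S)$ whose $\Gamma$-part vanishes --- that row-combination is essentially the desired $\eta$. So in practice I would phrase the quantum half of the proof as: (i) run the HNF as in \Cref{thm:QperfHerm_app}; (ii) observe the obstruction row gives an integer vector with $\Gamma^{\dag}\eta = 0$ and $S^{\dag}\eta$ odd; (iii) package it as a solution of $M\xi = b$; and conversely (iv) given such a $\xi$, pair it against any hypothetical real solution $z$ to derive $0 = S^{\dag}\eta + 2\ell \cdot(-1)$ contradicting $M\xi = b$. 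This citation-backed theorem need not be reproven from scratch in full detail; a proof along these lines referencing \Cref{thm:QperfHerm_app} and \Cref{cor:perfect_3XOR_merp} suffices.
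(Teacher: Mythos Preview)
Your treatment of the classical case (item~\ref{it:z2soln}) matches the paper's proof essentially verbatim: both use the Fredholm alternative over $\mathbb{F}_2$ to convert non-solvability of $\Gamma z = S$ into the existence of $\eta$ with $\Gamma\adj\eta = 0$, $S\adj\eta = 1$, and then package $\eta$ into a solution of $M\xi = b$.

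For the quantum case (item~\ref{it:diophsoln}) your route is correct but genuinely different from the paper's. The paper applies Schrijver's integer-programming duality (Corollary~4.1a of \cite{schrijver1998theory}) directly to the system $M\xi = b$: this says $M\xi = b$ has no integer solution iff there is a rational $w$ with $M\adj w \in \mathbb{Z}^{m+1}$ but $b\adj w \notin \mathbb{Z}$, and a short manipulation shows such a $w$ exists iff the primal $\Gamma z = S \pmod 2$ has a real solution. In other words, the paper dualizes the \emph{dual} system and lands back on the primal. You instead dualize the \emph{primal}: using the HNF of $(\Gamma\ S)$ from \Cref{thm:QperfHerm_app}, the rows of $\Omega^{-1}$ corresponding to the $b^{(2)}$ block are integer vectors $\eta$ with $\Gamma\adj\eta = 0$ and $S\adj\eta = b^{(2)}_j$, so an odd entry of $b^{(2)}$ hands you the certificate directly; your converse via pairing against a hypothetical real solution is straightforward. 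Your approach is more constructive and ties neatly to \Cref{alg:QandC_perf}, while the paper's approach is a cleaner one-line appeal to a known theorem; both are valid. (Your intermediate remark that ``$\Gamma$ has a right inverse over $\mathbb{Q}$'' is not generally true and should be dropped, but you correctly abandon that line for the HNF argument anyway.)
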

We note that \Cref{thm:QperfHerm_app}
is that it produces a MERP
solution while \Cref{thm:3XOR_solns} only governs existence of solutions.

\ssec{Classical Strategy: Proof of \texorpdfstring{\Cref{thm:3XOR_solns}  \Cref{it:z2soln}}{Theorem 6 Item 2}}

Moving from the defining equations of a 3XOR 
to \Cref{it:z2soln} of \Cref{thm:3XOR_solns} is done through a standard duality argument, which we repeat here for completeness. 

\begin{proof}[Proof (\Cref{it:z2soln} of \Cref{thm:3XOR_solns}).]
\label{pf:thm1 it2}
All equations in this proof are equations over $\mathbb{Z}_2$ and so we omit the $(\!\!\!\! \mod 2)$ notation used elsewhere in this paper. 

Because the image of a matrix is the orthogonal compliment of its adjoint's kernel (i.e. $\text{Im}(M) = \text{Ker}(M\adj)^\perp$ for any finite dimensional matrix $M$) we have \Cref{eq:rep_3XOR_dual_SOE} is equivalent to the statement that for all $\xi \in \left(\mathbb{Z}_2\right)^{3m}$ 
\begin{align}
         \Gamma\adj 
    \xi = 0 \implies S\adj \xi = 0
\end{align}
This happens iff there does not exist a vector $\xi \in \left(\mathbb{Z}_2\right)^{3m}$ with
\begin{align}
         \Gamma \adj 
    \xi = 0  \ \ \ \ \  \text{   and   } \ \ \ \ \ \
    S\adj \xi = 1
\end{align}
or equivalently iff the system of equations 
\begin{align}
    \label{eq:rep_3XOR_dual_SOE}
   M \xi:=  \begin{pmatrix}
        \Gamma \adj & 0 \\
        S\adj & 2
    \end{pmatrix} 
    \xi 
    &= 
    \begin{pmatrix}
    0 \\
    1
    \end{pmatrix}
\end{align}
does not have a solution over $\mathbb{Z}_2$. (The last column of this system of equations is essentially a dummy line, both sides are identically zero over $\mathbb{Z}_2$, but we leave it in to line up the classical and quantum cases). 
\end{proof}

\subsection{Quantum Strategy: Proof of \texorpdfstring{\Cref{thm:3XOR_solns}  \Cref{it:diophsoln}}{Theorem 6 Item 1}}

\Cref{it:diophsoln} of \Cref{thm:3XOR_solns} follows from \Cref{cor:perfect_3XOR_merp} and an  duality argument (Thm 5.30 of \cite{watts2018algorithms}).

We start the proof by constructing
a dual to the system of equations in \Cref{it:diophsoln}, 
using Corollary 4.1a of \cite{schrijver1998theory}:
{\it
the linear system of equations
$G z =b$ with $G, b$ having rational entries
has a solution $z$ with integer entries iff for each rational vector~$x$: 
\begin{align*}
 x^T G \text{ has integer entries}
\implies 
 x^T b \text{ has integer entries } 
\end{align*}
}

We apply this to the system of equations
$M\xi=b$ in \Cref{eq:rep_3XOR_dual_SOE}
%
    %
with $\xi \in \mathbb{Z}$. By Corollary 4.1a of of Schjiver this system of equations does not have a solution iff there exists a rational vector $w$ with
\begin{align}
    M\adj w 
    &= 
    \begin{pmatrix}
         \Gamma & S \\
          0_{3n} & 2
    \end{pmatrix}
    w
    \in \mathbb{Z}^{m+1} \text{ and } \label{eq:Duality_condn_1}\\
    b\adj w &= \begin{pmatrix}
         0_{3n} & 1
    \end{pmatrix}w
    &\notin \mathbb{Z}. \label{eq:Duality_condn_2}
\end{align}
We see the the last row of \Cref{eq:Duality_condn_1} and \Cref{eq:Duality_condn_2} are both satisfied iff 
\begin{align}
    w = 
    \begin{pmatrix}
         w' \\ 
         z' + \frac{1}{2}
    \end{pmatrix}
\end{align}
with $w'$ arbitrary and $z' \in \mathbb{Z}$. Then, substituting this back into \Cref{eq:Duality_condn_1} we see \Cref{eq:Duality_condn_1,eq:Duality_condn_2} are satisfied iff there exists a $w' \in \mathbb{R}^{3n}$, $z' \in \mathbb{Z}$ with
\begin{align}
    & \begin{pmatrix}
         \Gamma  & S
    \end{pmatrix}
    \begin{pmatrix}
         w' \\
         z' + \frac{1}{2} 
    \end{pmatrix}
    \in \mathbb{Z}^{m+1} \\
   & \Leftrightarrow  2
   \begin{pmatrix}
         \Gamma & S
    \end{pmatrix}
    \begin{pmatrix}
         w' \\
         z' + \frac{1}{2}
    \end{pmatrix}= 0 \pmod{2} \\
   & \Leftrightarrow  
   \begin{pmatrix}
         \Gamma  
    \end{pmatrix}
    \begin{pmatrix}
         2w' 
    \end{pmatrix}= S(2z' + 1) \pmod{2} \\
  &  \Leftrightarrow  
   \begin{pmatrix}
         \Gamma  
    \end{pmatrix}
    \begin{pmatrix}
         2w' 
    \end{pmatrix}= S \pmod{2}.
\end{align}
Setting $2w' = z$ we see the resulting system of equations is exactly the same as the system of equations in \Cref{eq:merp_matrix_SOE}. Thus, \Cref{eq:merp_matrix_SOE} has a solution (and a 3XOR game has a perfect quantum strategy) iff \Cref{eq:rep_3XOR_dual_SOE} does not have a solution, as desired. 

 \section{APPENDIX: Description of Code used in Experiments}
 \label{sec:MmaPrograms}
 
In this section we sketch (in the context of our experiments) how our algorithms were implemented in Mathematica.

\subsection{Generating Random XOR Games}

We fix $m,n$ and randomly  generate
`sample size' admissible 
$A,B,C,S$ matrices (with 1's uniformly distributed). In this sampling process, we do not allow (we discard) repeated clauses. To accommodate 
the imperatives of computer time we typically take smaller sample size for larger $m,n$.
 
\subsection{Primal Method}

\Cref{alg:QandC_perf} was implemented in Mathematica 
in a straightforward way; here is 
a rough outline of the commands. 
Given the parameters specifying the number of players and clauses of the quantum game, we first create a random matrix $W:=(\Gamma,S)$ representing the linear system we try to solve. Then, we use the command 
\begin{center}
{\tt HermiteDecomposition[W] = $\{\Omega, H\}$ }
\end{center}
to obtain the desired Hermite normal form $H$ of $W$
with $\Omega$ being the corresponding row operation matrix.

After this we partition $H$ into 4 parts
 \begin{align}
  \label{eq:hermDec2}
 H= \begin{pmatrix}
      R  & \bo \\
      0  & \bt  
 \end{pmatrix} 
 \end{align}
as in \Cref{eq:hermDec}. 
The game is quantum solvable iff all entries of $\bo$ are even integers.  Assuming real solvability, to
determine classical solvability, we simply solve the system 
\Cref{eq:matrix_SOE} for integers modulo 2. Specifically, 
\begin{center}
    {\tt Quiet[Check[LinearSolve[$\Gamma$,S,Modulus$\rightarrow 2$];
              \mbox{ \qquad  } True, False, LinearSolve::nosol]]  }
\end{center}

Note that the Mathematica command HermiteDecomposition may result in a memory leak. An effective fix is to do computations after setting
the option command:
\begin{center}
    {\tt SetSystemOptions["LinearAlgebraOptions"$\rightarrow  \{$"IntegerBlasParallelizationThresholds"$\rightarrow\{$2*d,2*d,8$\}$] }
\end{center}
where $d = \max(m,n)$ for a $m\times n$-matrix input.

\subsection{Dual Method}

Now we  give the main commands used in our implementation of the ``dual'' method for identifying \qperf 3XOR games outlined in \Cref{app:dual_equations}.

We use the Mathematica command 
\begin{verbatim}
    Solve[equation, Modulus -> 2] 
\end{verbatim}
to find binary solutions to the 
\Cref{eq:matrix_SOE}
or to show none exists. If a binary solution does not exist (there is no classical 3XOR solution), we turn to $Mx=b$ in \Cref{eq:3XOR_dual_SOE}. On this we use the Command
\begin{verbatim}
    FindInstance[equation, variables, Integers]
\end{verbatim}
to find integer solutions (meaning the game has no quantum solution)
or show this is impossible (meaning there is a quantum solution).

\ssec{Our use of the Algorithms}

We compared the results coming from both the Primal and the Dual Algorithm and found that, as expected, they always agree.
The first algorithm we implemented and used was the Dual Algorithm and the data generated here for  less than 33 questions came from it.
Data coming from more than 32 questions was produced by the Primal Algorithm.

\end{document}